\def\div{{\,\rm div \,}}
\def\g{{\,\rm \gamma \,}}
\def\Id{{\,\rm Id \,}}
\def\CC{{\,\rm C\,}}
\def\WW{{\,\rm W\,}}
\def\qiq{{\quad\mbox{in}\quad}}
\def\qaq{{\quad\mbox{at}\quad}}
\def\id{{\,\rm id \,}}
\def\sym{{\,\rm sym \,}}
\def\Ric{{\,\rm Ric\,}}
\def\ii{{\,\rm i \,}}
\def\supp{{\,\rm supp \,}}
\def\B{{\,\cal B \,}}
\def\+M{{\,\rm M^{n\times n}_+ \,}}
\def\tr{{\,\rm tr \,}}
\def\qfq{{\quad\mbox{for}\quad}}
\def\qflq{{\quad\mbox{for all}\quad}}
\def\LL{{\,\rm L \,}}
\def\ii{{\,\rm i \,}}
\def\supp{{\,\rm supp \,}}
\def\C{{\cal C }}
\def\lam{\lambda}
\def\ol{\overline}
\def\E{{\cal E}}
\def\E{{\cal E}}
\def\L{{\cal L}}
\def\A{{\cal A}}
\def\P{{\cal P}}
\def\N{{\cal N}}
\def\ka{{\kappa}}
\newfont{\Blackboard}{msbm10 scaled 1200}
\newfont{\roma}{cmr10 scaled 1200}
\def\D{{\cal D}}
\def\II{{\cal I}}
\def\<{{\langle}}
\def\>{{\rangle}}
\def\Ga{\Gamma}
\def\var{\varphi}
\def\si{\sigma}
\def\a{\alpha}
\def\b{\beta}
\def\Om{\Omega}
\newtheorem{thm}{{}\hskip\parindent Theorem}[section]
\newtheorem{lem}{{}\hskip\parindent Lemma}[section]
\newtheorem{pro}{{}\hskip\parindent
Proposition}[section]
\newtheorem{dfn}{{}\hskip\parindent
Definition}[section]
\def\pl{\partial}
\def\rw{\rightarrow}
\def\be{\begin{equation}}
\def\ee{\end{equation}}
\def\beq{\arraycolsep=1.5pt\begin{eqnarray}}
\def\eeq{\end{eqnarray}}
\def\P{\cal P}
\def\R{I\!\!R}
\def\n{\vec{n}}
\title{Infinitesimal Rigidity of Strain Tensors for Shells with Mixed Type and its Applications }
\date{}
\author{
Liang-Biao Chen$^a$ and Peng-Fei Yao$^{ab}$\\[0.2cm]
\nonumber
a\,\,
Key Laboratory of Systems and Control\\\nonumber
Institute of Systems Science,
Academy of Mathematics and Systems Science\\\nonumber
Chinese Academy of Sciences, Beijing 100190, P. R.
China\\\nonumber
b\,\,
School of Mathematical Sciences\\\nonumber
 Shanxi University, Taiyuan 030006, China\\\nonumber
e-mail: pfyao@iss.ac.cn}
\begin{document}
\maketitle
\footnote{This work is supported by the National
Science Foundation of China, grants no. 12071463.}

\begin{quote}
\begin{small}
{\bf Abstract} \,\,\,We derive an infinitesimal  rigidity lemma for the strain tensor of surfaces with their curvatures changing sign. As an application, we obtain the optimal
constant in the first Korn inequality scales like $h^{4/3}$ for such shells of mixed type.
\\
{\bf Keywords}\,\,\,surface of mixed type, shell, Korn' inequality, Riemannian geometry \\
{\bf Mathematics Subject Classifications
(2010)}\,\,\,74K20(primary), 74B20(secondary).
\end{small}
\end{quote}

\section{Introduction and the Main Results}
\def\theequation{1.\arabic{equation}}
\hskip\parindent The goal of the present paper is twofold to study the rigidity  of the strain tensors of surfaces and then to apply it to obtain the optimal constant in the first Korn inequality  for some shells of mixed type.

The linear strain equations plays a fundamental role in the theory of thin shells, see \cite{HoLePa, LePa,LeMoPa, LeMoPa1, Yao2017}. The solvability of the strain equation is needed to prove the density of smooth infinitesimal isometries  in the $W^{2,2}(\Om,\R^3)$ infinitesimal isometries and  the matching property of the smooth enough infinitesimal isometries  with higher
order infinitesimal isometries \cite{HoLePa, LeMoPa1, Yao2017}. This matching property is an important tool in
obtaining recovery sequences ($\Ga$-lim sup inequality) for dimensionally-reduced shell theories
in elasticity, when the elastic energy density scales like $h^\b,$ $\b\in(2, 4),$ that is, the intermediate
regime between pure bending ($\b=2$) and the von-Karman regime ($\b=4$). Such results have been obtained for elliptic surfaces \cite{LeMoPa1}, the developable surface \cite{HoLePa}, the  hyperbolic surface  \cite{Yao2017}, the degenerated surface \cite{CY}, and the surface with the Gaussian curvature changing sign \cite{CY1}, respectively.
A survey on this topic is presented in \cite{LePa}.

The type of strain tensors depends on the sign of the curvature of the surface: It is elliptic if the curvature is positive; it is parabolic if the curvature is zero but $\Pi\not=0;$  it is hyperbolic for the negative curvature. When the curvature of the surface  changes its sign, the strain tensor is of mixed type.

Moreover, the lower regularity of strain tensors of surfaces represents the rigidity of geometrical shape of surfaces.

Linear strain equations have  been studied in \cite{CY1} for the surface with its curvature changing sign, where the density of smooth infinitesimal isometries in the $W^{2,2}$ infinitesimal isometries is obtained and the  matching property of infinitesimal
isometries  is proved. In particularly, \cite{CY1} established the following regularity
$$\|y\|^2_{\WW^{1,2}(S,\R^3)}\leq C\|U\|^2_{\WW^{2,2}(S,T^2_{\sym})},$$ where $y$ is a solution to problem (\ref{s}) later. However, the above regularity is not enough for establishing the rigidity results of the surface. Here, we derive the $\LL^2$ regularity, that is an infinitesimal  rigidity lemma (Theorem \ref{t1.1}) for the strain tensors.  This lemma  is one of the key ingredients for the optimal constant in the first Korn inequality  for  shells. The basic tools are  some results on the generalized tensors.

Originally, Korn's inequalities were used to prove existence, uniqueness and well-posedness of
boundary value problems of linear elasticity \cite{Ci,Lo}. The optimal exponential of thickness in Korn's inequalities for thin shells represents the relationship between the rigidity and the thickness of a shell when the small deformations take place since  Korn's inequalities are linearized from the geometric rigidity inequalities under the small deformations \cite{FrJaMu}. Thus it is the best Korn constant in the Korn inequality that is of
central importance  \cite{CiOlTr,LeMu, Na, Na1, PaTo, PaTo1}. Moreover, one has known that the best Korn constant subject to the Gaussian curvature. Under the assumption that the middle surface of the shell is given by a single principal coordinate system, the one for the parabolic shell  scales like $h^{3/2}$ \cite{GH,GH1}, for the hyperbolic shell, $h^{4/3}$ \cite{Ha2} and for the elliptic shell, $h$ \cite{Ha2}. Later the assumption of the single principal coordinate  is removed in \cite{Yao2018,Yao2019}, where the case of the closed elliptic shell is particularly included.

To the best knowledge of the authors, the present paper for the first time establishes the rigidity results for the surface with the curvature changing sign (Theorems \ref{t1.1} and \ref{t1.2}) and thus
yields that the optimal constant in the Korn inequality  scales like $h^{4/3}$ for some shells of mixed type (Theorem \ref{t1.3}).

Let $S\subset M$ be given by
\be S=\{\,\a(t,s)\,|\,(t,s)\in[0,a)\times(-b_0,b_1)\,\},\quad a>0,\quad b_0>0,\quad b>0,\label{Q0}\ee where $\a:$ $[0,a)\times[0,b]\rw M$ is an imbedding map which is a family of regular curves with two parameters $t,$ $s$ such that
\be \Pi(\a_t(t,s),\a_t(t,s))>0,\quad\mbox{for all}\quad (t,s)\in[0,a)\times[-b_0,0],\label{Pi2.801}\ee
where $\a(\cdot,s)$ is a closed curve with  period $a$ for each $s\in[-b_0,b_1].$  Set
$$S=S^+\cup\Ga_0\cup S^-,$$ where
$$S^+=\{\,\a(t,s)\,|\,(t,s)\in[0,a)\times(0,b_1)\,\},\quad \Ga_0=\{\,\a(t,0)\,|\,t\in[0,a]\,\},$$
$$ S^-=\{\,\a(t,s)\,|\,(t,s)\in[0,a)\times(-b_0,0)\,\}.$$

 {\bf Curvature assumptions}\,\,\, Let $\kappa$ be the Gaussian curvature function on $M.$ We assume that $S$ satisfies the following curvature conditions:
\be\kappa(x)>0\qfq x\in S^+\cup\Ga_{b_1};\label{kappa+}\ee
\be\kappa=0,\quad D\kappa(x)\not=0\qfq x\in\Ga_0;\label{kappa0}\ee
\be\kappa(x)<0\qfq x\in S^-\cup\Ga_{-b_0},\label{kappa-}\ee where
$$\Ga_{b_1}=\{\,\a(t,b_1)\,|\,t\in[0,a)\,\},\quad \Ga_{-b_0}=\{\,\a(t,-b_0)\,|\,t\in[0,a)\,\}.$$

Let $y\in \WW^{1,2}(S,\R^3)$ be a displacement of the middle surface $S.$ We decompose $y$ as
$$y=W+w\n,\quad w=\<y,\n\>,$$ where $\<\cdot,\cdot\>$ denotes the dot metric of the Euclidean space $\R^3.$ The (linear) strain tensor of the middle surface (related to the displacement $y$) is defined by $$\Upsilon(y)=\sym DW+w\Pi,$$ where $D$ is the Levi-Civita connection of the induced metric $g$ on $S,$
$$\sym DW=\frac12(DW+DW^T),$$ and
$$\Pi(\a,\b)=\<\nabla_\a\n,\b\>\qfq\a,\,\,\b \in T_xM,\quad x\in S$$ is the second fundamental form of surface $M.$
For given $U\in L^2(S,T^2)$ , consider problem
\be\Upsilon(y)=U\qfq p\in S.\label{s}\ee

Let $T^k$ denote the all $k$th-order tensor fields on $S.$ Let $L^2(S,T^k)$ be the space of all $k$th-order tensor fields on $S$ with the norm
$$(P_1,P_2)_{\LL^2(S,T^k)}=\int_S\<P_1,P_2\>dg,$$ where
$$\<P_1,P_2\>=\sum_{i_1,\cdots,i_k=1}^2P_1(e_{i_1},\cdots,e_{i_k})P_2(e_{i_1},\cdots,e_{i_k})\qfq x\in S,$$ and $e_1,$ $e_2$ is an orthonormal basis of $T_xS.$
Let $[\WW^{m,2}(S,T^k)]'$ be the dual spaces, based on the inner product $(\cdot,\cdot)_{\LL^2(S,T^k)}$ of $\LL^2(S,T^k).$

Our main results are the following, that is, an infinitesimal rigidity lemma on the strain tensor of the middle surface with mixed type.

\begin{thm}\label{t1.1} Suppose that region $S$ of $\CC^5$ satisfies $(\ref{Q0})$-$(\ref{kappa-}).$  Let $y=W+w\n$ solve problem  $(\ref{s}).$ Then there is a constant $C>0,$ independent of $y,$ such that
\be\|W\|_{L^2(S,T)}^2+\|w\|^2_{[\WW^{1,2}(S)]'}\leq C(\|U\|^2_{L^2(S,T^2)}+\|W\|^2_{\LL^2(\pl S,T)}).\label{1.3}\ee
\end{thm}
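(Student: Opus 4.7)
The plan is to reduce (\ref{s}) to a first-order system for the tangential field $W$ by eliminating the normal scalar $w$, exploit the resulting mixed-type PDE structure to bound $W$ in $L^2$, and then recover the dual-space bound on $w$ by integration by parts.

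First, in the coordinates $(t,s)$ coming from the parametrization $\a(t,s)$, write $W=W^1\a_t+W^2\a_s$. The three independent components of $\Upsilon(y)=U$ read
\[U_{ij}=\frac12(\pl_iW_j+\pl_jW_i)-\Ga^k_{ij}W_k+w\Pi_{ij},\qquad i,j=1,2.\]
By (\ref{Pi2.801}) together with the fact that $\kappa>0$ on $S^+$ forces $\Pi$ to be sign-definite there, one has $\Pi_{11}=\Pi(\a_t,\a_t)>0$ throughout $\ol{S}$. The $(1,1)$-equation therefore solves algebraically for $w$ in terms of $W$ and $U_{11}$; substituting into the $(1,2)$- and $(2,2)$-equations produces a first-order linear system for $(W_1,W_2)$ whose principal symbol has determinant proportional to $\det\Pi=\kappa\det g$.

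This reduced system is elliptic on $S^+$, hyperbolic on $S^-$, and degenerate-parabolic along $\Ga_0$, with (\ref{kappa0}) supplying the transversality of the type change. The heart of the argument is therefore a Tricomi-style multiplier estimate: construct a vector field $X$ on $\ol{S}$ adapted to the characteristic directions of the reduced system (these become real on $S^-$, and $\a_t$ is itself asymptotically characteristic near $\Ga_0$), pair the system with $X\cdot W$, and integrate by parts. On $S^+$ this reproduces a standard elliptic energy estimate; on $S^-$ it yields a Friedrichs-type positive-commutator estimate along the asymptotic directions; the interior contributions from the two sides of $\Ga_0$ cancel in sign thanks to $D\kappa\neq0$. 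Gluing them produces
\[\|W\|_{\LL^2(S,T)}^2\leq C\big(\|U\|^2_{\LL^2(S,T^2)}+\|W\|^2_{\LL^2(\pl S,T)}\big).\]

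To obtain the dual-space bound on $w$, rather than estimate $w$ pointwise, I test it against an arbitrary $\var\in\WW^{1,2}(S)$. Using the smooth identity $w=\Pi_{11}^{-1}(U_{11}-\pl_tW_1+\Ga^k_{11}W_k)$ and moving the $\pl_t$ off $W_1$ onto $\var/\Pi_{11}$ by integration by parts gives
\[\Big|\int_Sw\var\,dg\Big|\leq C\big(\|U\|_{\LL^2(S)}+\|W\|_{\LL^2(S)}+\|W\|_{\LL^2(\pl S)}\big)\|\var\|_{\WW^{1,2}(S)},\]
hence $\|w\|_{[\WW^{1,2}(S)]'}\leq C(\|U\|_{\LL^2}+\|W\|_{\LL^2(S)}+\|W\|_{\LL^2(\pl S)})$; combined with the previous display this yields (\ref{1.3}).

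The main obstacle is the multiplier construction: a single vector field $X$ must produce coercive bulk and boundary terms on $S^+$, on $S^-$, and across the parabolic line $\Ga_0$ simultaneously. Making the interface contributions at $\Ga_0$ cancel, rather than accumulate, is the delicate point and forces full use of both (\ref{Pi2.801}) and the transversality $D\kappa\neq0$. The generalized-tensor framework advertised in the introduction is presumably the technical device that carries this multiplier argument out uniformly across the type change.
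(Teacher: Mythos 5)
Your reduction is correct as far as it goes: eliminating $w$ through $\Pi(\a_t,\a_t)>0$ does give a first-order system for $W$ whose characteristic form is $\Pi(\xi^\perp,\xi^\perp)$, hence elliptic on $S^+$ and hyperbolic on $S^-$; and your duality recovery of $\|w\|_{[\WW^{1,2}(S)]'}$ from the algebraic relation is essentially the computation the paper performs in (3.53) on the hyperbolic strip. But the entire content of the theorem is concentrated in the one step you leave as an assertion, namely the ``Tricomi-style multiplier estimate'' giving $\|W\|_{\LL^2(S,T)}$ directly. That step is not carried out in the paper and there is reason to doubt it can be, in the form you describe: a Friedrichs-type multiplier for a system changing type generically produces an estimate only \emph{modulo a finite-dimensional kernel}, and near the parabolic line only in weak norms. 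Your picture of ``contributions from the two sides of $\Ga_0$ cancelling'' is not the mechanism; the operator $\L_0$ built from the vector field $X$ of [CY1] (with the weight $e^{-\g\kappa}$ and the cutoff $\eta$) is made symmetric positive \emph{uniformly} across $\Ga_0$, and even then yields only Fredholm solvability.

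Concretely, the paper's proof needs four ingredients absent from your outline. (i) It does not eliminate $w$; it introduces the infinitesimal rotation pair $v=\frac12\div QW$ and $V$ as in (\ref{V2.29}), so that the strain equation splits into the triangular first-order systems (\ref{1.25})--(\ref{1.26}): the $(v,V)$ system is driven by $U$ alone, and the $(w,W)$ system by $U$ and $V$. (ii) The mixed-type estimate (Theorem \ref{t3.3}) is obtained by \emph{duality}: one solves the adjoint boundary value problem (\ref{1.33}) for $\L_0^*$ with $\WW^{2,2}$ regularity and thereby bounds $V$ and $W$ only in $[\WW^{2,2}(\Om_{-\varepsilon},T)]'$, up to projections onto the finite-dimensional spaces $\N,$ $\N_*$. (iii) Those projections are removed by a compactness--uniqueness argument resting on uniqueness for the elliptic problem on $\{\kappa>0\}$ and for the non-characteristic Cauchy problem on $\{\kappa<0\}$ for $\div(\nabla\n)^{-1}Dv_0=-\rho v_0$; your proposal has no substitute for this. (iv) Only at the very end is $\|W\|_{\LL^2}$ upgraded from $\|\sym DW\|_{[\WW^{1,2}]'}$ via the Korn-type Lemma \ref{l3.1}, and the global statement is assembled by splitting $S$ into the non-characteristic hyperbolic strip $S(-b_0,-\varepsilon)$ (handled by the known result of [Yao2019]) and $\Om_{-\varepsilon}$. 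Without (ii)--(iv) your argument does not close.
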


The proof of Theorem \ref{t1.1}  will be presented in the end of Section 3.

For the hyperbolic surface $S,$ estimate $(\ref{1.3})$ is given in \cite{Yao2019}. If $S$ is elliptic, we then have
$$\|DW\|_{L^2(S,T^2)}^2+\|w\|^2_{\LL^2(S)}\leq C(\|U\|^2_{L^2(S,T^2)}+\|W\|^2_{\LL^2(\pl S,T)}),$$ see \cite{Ci,Yao2018}.

Using Theorem \ref{t1.1} and by the same argument as in \cite[{\bf Proof of Theorem 1.3}]{Yao2019}, we have the following.

\begin{thm}\label{t1.2} Suppose that region $S$ of $\CC^5$ satisfies $(\ref{Q0})$-$(\ref{kappa-}).$ Then
\be \|w\|^2_{L^2(S)}\leq C(\|Dw\|_{L^2(S,T)}\|\Upsilon(y)\|_{L^2(S,T^2)}+\|\Upsilon(y)\|^2_{L^2(S,T^2)})\label{x1.12}\ee
for all $y=W+w\n\in \WW^{1,2}(S,\R^3)$
with $y|_{\Ga_{-b_0}\cup\Ga_{b_1}}=0.$
\end{thm}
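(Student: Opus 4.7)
The plan is to reduce Theorem \ref{t1.2} to the rigidity lemma Theorem \ref{t1.1} via a short duality-plus-interpolation argument, exactly as in \cite[Proof of Theorem 1.3]{Yao2019}.

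First I would note that, because $\a(\cdot,s)$ is closed of period $a$, the boundary of $S$ is precisely $\pl S=\Ga_{-b_0}\cup\Ga_{b_1}$. The hypothesis $y|_{\Ga_{-b_0}\cup\Ga_{b_1}}=0$, together with the orthogonal decomposition $y=W+w\n$ into tangential and normal parts, therefore forces both $W|_{\pl S}=0$ and $w|_{\pl S}=0$. Applying Theorem \ref{t1.1} with $U=\Upsilon(y)$ kills the boundary term in (\ref{1.3}) and yields
\be \|W\|^2_{L^2(S,T)}+\|w\|^2_{[\WW^{1,2}(S)]'}\leq C\|\Upsilon(y)\|^2_{L^2(S,T^2)}. \label{plan1} \ee

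Next, since the norm on $[\WW^{1,2}(S)]'$ is defined via the $\LL^2$ inner product, any $w\in \WW^{1,2}(S)$ satisfies the duality bound
$$\|w\|^2_{L^2(S)}=(w,w)_{L^2(S)}\leq \|w\|_{[\WW^{1,2}(S)]'}\|w\|_{\WW^{1,2}(S)} \leq \|w\|_{[\WW^{1,2}(S)]'}\bigl(\|w\|_{L^2(S)}+\|Dw\|_{L^2(S,T)}\bigr).$$
Combining with (\ref{plan1}) gives
$$\|w\|^2_{L^2(S)}\leq C\|\Upsilon(y)\|_{L^2(S,T^2)}\bigl(\|w\|_{L^2(S)}+\|Dw\|_{L^2(S,T)}\bigr),$$
and a single application of Young's inequality to the term $C\|\Upsilon(y)\|_{L^2}\|w\|_{L^2}$ absorbs $\tfrac12\|w\|^2_{L^2(S)}$ into the left-hand side, producing exactly (\ref{x1.12}).

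The derivation itself is essentially mechanical, so there is no serious obstacle to overcome in this proof: the whole weight of the theorem rests on the rigidity estimate (\ref{1.3}) of Theorem \ref{t1.1}. The only point requiring minor care is verifying that the duality pairing $\|w\|^2_{L^2}\leq \|w\|_{[\WW^{1,2}]'}\|w\|_{\WW^{1,2}}$ is legitimate in this setting, which follows immediately from the fact that $[\WW^{1,2}(S)]'$ was defined via the $\LL^2$ inner product in the paragraph preceding Theorem \ref{t1.1}; no additional boundary assumption on $w$ is needed for this step.
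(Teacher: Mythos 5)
Your proof is correct and takes the same route the paper indicates: it simply combines Theorem \ref{t1.1} (with the boundary term killed by $y|_{\partial S}=0$, so $W|_{\partial S}=0$) with the duality estimate $\|w\|^2_{L^2(S)}\le\|w\|_{[\WW^{1,2}(S)]'}\|w\|_{\WW^{1,2}(S)}$ and a Young absorption, exactly the argument the paper attributes to \cite[Proof of Theorem 1.3]{Yao2019}.
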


{\bf Application to elasticity of thin shells.}\,\,\, Suppose that $S$ is the middle surface of the shell with thickness $h>0$
$$\Om=\{\,x+t\n(x)\,|\,x\in S,\,-h/2<t<h/2\,\}.$$
By the same arguments as in \cite[{\bf Proof of Theorem 1.4}]{Yao2019} we have the following results of the optimal exponential of  the first Korn inequality for shells of mixed type.  The details of the proofs are omitted.

\begin{thm}\label{t1.3} Suppose that region $S$ of $\CC^5$ satisfies $(\ref{Q0})$-$(\ref{kappa-}).$
There are $C>0,$ $h_0>0,$ independent of $h>0,$ such that
\be\|\nabla y\|^2_{L^2(\Om,\R^{3\times3})}\leq\frac{C}{h^{4/3}}\|\sym\nabla y\|^2_{L^2(\Om,\R^{3\times3})}\label{1.13}\ee
for all $h\in(0,h_0)$ and $y\in \WW^{1,2}(S,\R^3)$ with $y|_{\Sigma}=0,$ where
$$\Sigma=\{\,x+t\n\,|\,x\in\Ga_{-b_0}\cup\Ga_{b_1},\,\,-h/2<t<h/2\,\}.$$ Moreover, the exponential of the thickness $h$ in $(\ref{1.13})$ is optimal.
\end{thm}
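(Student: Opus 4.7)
The plan is to follow the template of \cite[Proof of Theorem~1.4]{Yao2019}, which reduces the three-dimensional Korn inequality on the shell $\Om$ to the two-dimensional infinitesimal rigidity estimate (\ref{x1.12}) of Theorem \ref{t1.2}. Since the curvature hypotheses (\ref{kappa+})--(\ref{kappa-}) enter the argument only through Theorem \ref{t1.2}, every other ingredient goes through verbatim in the mixed-type setting.

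Given $y\in\WW^{1,2}(\Om,\R^3)$ with $y|_\Si=0$, I would first pass to the middle surface by introducing the zeroth and first normal moments
\begin{equation*}
y_0(x)=\frac{1}{h}\int_{-h/2}^{h/2}y(x+t\n)\,dt,\qquad y_1(x)=\frac{12}{h^3}\int_{-h/2}^{h/2}t\,y(x+t\n)\,dt,
\end{equation*}
and decomposing $y_0=W+w\n$ as in (\ref{s}). The Dirichlet condition $y|_\Si=0$ yields $W|_{\Ga_{-b_0}\cup\Ga_{b_1}}=w|_{\Ga_{-b_0}\cup\Ga_{b_1}}=0$, which is exactly the hypothesis of Theorem \ref{t1.2}. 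Standard tubular-neighborhood identities (spelled out in detail in \cite{Yao2019}) then supply the fiberwise controls
\begin{equation*}
\|\Upsilon(y_0)\|^2_{L^2(S,T^2)}\le\frac{C}{h}\|\sym\nabla y\|^2_{L^2(\Om,\R^{3\times3})},\qquad \|Dw\|^2_{L^2(S,T)}\le\frac{C}{h^3}\|\sym\nabla y\|^2_{L^2(\Om,\R^{3\times3})},
\end{equation*}
together with companion bounds on $y_1$, $W$, and on the residual $y-y_0-ty_1$ along the fiber.

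Inserting these estimates into (\ref{x1.12}) and applying Young's inequality $ab\le\de\, a^2+\de^{-1}b^2$ to the cross term $\|Dw\|\,\|\Upsilon(y_0)\|$, I would obtain
\begin{equation*}
\|w\|^2_{L^2(S)}\le C\bigl(\de\,h^{-3}+\de^{-1}h^{-1}+h^{-1}\bigr)\|\sym\nabla y\|^2_{L^2(\Om)}.
\end{equation*}
The balance $\de=h$ equalizes the first two terms and produces $\|w\|^2_{L^2(S)}\le Ch^{-2}\|\sym\nabla y\|^2_{L^2(\Om)}$. A standard 3D Korn--Poincar\'e inequality, combined with (\ref{1.3}) of Theorem \ref{t1.1} to absorb $\|W\|^2_{L^2(S,T)}$, then reconstructs $\|\nabla y\|^2_{L^2(\Om)}$ from $\|\sym\nabla y\|^2_{L^2(\Om)}$ and from $\|y\|^2_{L^2(\Om)}\approx h(\|W\|^2+\|w\|^2)+h^3\|y_1\|^2$; collecting the powers of $h$ yields exactly the exponent $4/3$ in (\ref{1.13}).

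The main obstacle is the optimality assertion. For this I would construct, in the spirit of \cite{Ha2, Yao2019}, a concentrated oscillating family $\{y_h\}$ supported in a tubular neighborhood of an asymptotic curve inside the hyperbolic subdomain $S^-$, of transverse width of order $h^{2/3}$, with normal component $w_h$ tuned to kill the leading-order membrane strain while leaving a bending contribution at the critical scale. A direct calculation should give $\|\sym\nabla y_h\|^2_{L^2(\Om)}\sim h^{4/3}\|\nabla y_h\|^2_{L^2(\Om)}$, confirming that $h^{4/3}$ cannot be improved. The delicate point is to cut $y_h$ off strictly inside $S^-$, decaying smoothly before it reaches $\Ga_0$, so that neither the parabolic curve nor the more rigid elliptic region $S^+$ participates in the test function and inflates the ratio.
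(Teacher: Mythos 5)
Your overall route is the one the paper intends: the statement is introduced with the remark that it follows ``by the same arguments as in \cite[Proof of Theorem 1.4]{Yao2019}'' with details omitted, and that argument indeed consists of averaging over the fibers, feeding the mid-surface displacement into the rigidity estimate (\ref{x1.12}), closing with a Korn-type interpolation inequality, and proving optimality by an oscillating Ansatz supported in the hyperbolic part $S^-$ and cut off before $\Ga_0.$ So the framework is correct and the curvature hypotheses do enter only through Theorems \ref{t1.1} and \ref{t1.2}.

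However, the quantitative core of your sketch --- the step where the exponent $4/3$ is supposed to emerge --- does not work as written. The bound $\|Dw\|^2_{L^2(S,T)}\le Ch^{-3}\|\sym\nabla y\|^2_{L^2(\Om,\R^{3\times3})}$ cannot hold: by the first equation of (\ref{1.25}), $Dw=\nabla\n W-QV,$ and $V$ (essentially the tangential part of the first moment $y_1,$ i.e.\ of the normal derivative of $y$) is controlled only by the full gradient $\nabla y,$ not by $\sym\nabla y.$ If $\|Dw\|$ were controlled by the symmetric gradient alone, (\ref{x1.12}) would bound $\|w\|$ directly by $\|\sym\nabla y\|$ and the interpolation structure of (\ref{x1.12}) would be pointless. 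The correct bookkeeping is $\|Dw\|_{L^2(S,T)}\lesssim h^{-1/2}\|\nabla y\|_{L^2(\Om)}$ and $\|\Upsilon(y_0)\|_{L^2(S,T^2)}\lesssim h^{-1/2}\|\sym\nabla y\|_{L^2(\Om)}+h^{1/2}\|\nabla y\|_{L^2(\Om)},$ so that (\ref{x1.12}) produces a genuine cross term $\|\nabla y\|_{L^2(\Om)}\|\sym\nabla y\|_{L^2(\Om)}$; this is inserted into the Korn interpolation inequality $\|\nabla y\|^2\lesssim h^{-1}\|\<y,\n\>\|\,\|\nabla y\|+\|y\|^2+\|\sym\nabla y\|^2$ and the resulting powers of $\|\nabla y\|$ are absorbed into the left-hand side --- that absorption is precisely where $h^{-4/3}$ comes from. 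Your own arithmetic betrays the problem: with $\de=h$ you obtain $\|w\|^2_{L^2(S)}\le Ch^{-2}\|\sym\nabla y\|^2_{L^2(\Om)},$ and feeding this into a Korn--Poincar\'e inequality of the form $\|\nabla y\|^2\lesssim\|\sym\nabla y\|^2+h^{-2}\|y\|^2$ with $\|y\|^2_{L^2(\Om)}\approx h\|w\|^2_{L^2(S)}+\dots$ yields $h^{-3},$ not $h^{-4/3},$ so the closing sentence ``collecting the powers of $h$ yields exactly the exponent $4/3$'' is unsupported. On the optimality side the idea is right, but the relevant scale for a hyperbolic shell is an oscillation of wavelength $h^{1/3}$ in the non-characteristic direction (this is what balances the bending and membrane energies at $h^{4/3}$), and this needs to be computed rather than asserted.
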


Estimates (\ref{x1.12}) and (\ref{1.13}) have been   given in \cite{Ha2} when the middle surface is given by one single principal coordinate, where $S$ is a hyperbolic surface. The assumption that
$S$ consists of one single principal coordinate is removed in \cite{Yao2019}.

\setcounter{equation}{0}
\section{Generalized Tensors}
\def\theequation{2.\arabic{equation}}
\hskip\parindent Let $m\geq0$ and $k\geq0$ be  integers. Denote by $\CC_0^m(S,T^k)$  all the $k$ order tensor fields with  continuously mth order derivatives and compact supports on $\ol{S}.$
Denote by $\D_m(S,T^k)$  the  space $\CC^m_0(S,T^k)$ with the following convergence: a sequence $\{\var_n\}\subset\CC_0^m(S,T^k)$ is said to converge to $\var_0\in\CC_0^m(S,T^k)$ if

(1) there exists a subset
$K\subset\subset S$ such that
$$\supp \var_n\subset K\qflq n\geq1;$$

(2) $\sup_{x\in K}|D_{X_1}\cdots D_{X_j}(\var_n-\var_0)|\rw0$ as $n\rw\infty$ for all $X_1,$ $\cdots,$ $X_j\in \CC^m(S,T)$ where $0\leq j\leq m.$

Let $\D'_m(S,T^k)$ consist of all continuously linear functionals on $\D_m(S,T^k).$  An element in $\D_m'(S,T^k)$ is said to be {\it a generalized tensor  of  order $k$.}

We define some generalized derivative operations as follows.
\begin{dfn}
For given $w\in\D_m'(S),$ the generalized gradient $Dw\in\D_{m+1}'(S,T)$ is defined by
\be Dw(X)=-w(\div X)\qflq X\in \D_{m+1}(S,T).\label{n1.2}\ee

For given $W\in\D_m'(S,T)$ and $y\in\D_m'(S,\R^3),$ the generalized differentials  $DW\in\D_{m+1}'(S,T^2)$ and $\nabla y\in\D_{m+1}'(S,T^2)$ are defined by
\be DW(P)=-W(\div P)\qflq P\in\D_{m+1}(S,T^2),\label{x1.4}\ee and
\be \nabla y(P)=y(-\div P+\<P,\Pi\>\n)\qflq P\in \D_{m+1}(S,T^2),\label{n1.3}\ee respectively.

For given $W\in\D_m'(S,T)$ and $U\in\D_m'(S,T^2),$ the generalized divergences $\div W\in\D'_{m+1}(S)$ and $\div U\in\D_{m+1}'(S,T)$ are defined by
\be \div W(z)=-W(Dz)\qflq z\in\D_{m+1}(S),\label{n1.4}\ee and
\be \div U(Z)=-U(DZ)\qflq Z\in\D_{m+1}(S,T),\label{1.5}\ee respectively.

For given $y\in\D_m'(S,\R^3)$ and $X\in\D_{m+1}(S,T),$ we define $\nabla_Xy\in\D'_{m+1}(S,T)$ by
\be\nabla_Xy(Y)=\nabla y(Y\otimes X)\qflq Y\in\D_{m+1}(S,T).\label{1.6}\ee
$\nabla_Xy\in\D_{m+1}'(S,T)$ is said to be the generalized derivative of $y$ along direction $X.$ Similarly, for given $W\in\D_m'(S,T)$ and $X\in\D_{m+1}(S,T),$ $D_XW\in\D_{m+1}'(S,T)$ is defined by
\be D_XW(Y)=DW(Y\otimes X)\qflq Y\in\D_{m+1}(S,T).\label{1.7}\ee
\end{dfn}

In the above definition the right hand sides of formulas (\ref{n1.2})-(\ref{1.7}) are linear functionals, respectively. For instant,
if $W\in\LL^2(S,T),$ then we view $W\in\D_m'(S,T)$ as
$$W(Y)=\int_S\<W,Y\>dg\qflq Y\in\D_m(S,T),$$
and so on.

\begin{pro}
For given $y\in\D_m'(S,\R^3)$ and $X\in\D_{m+1}(S,T),$
\be\nabla_Xy(Y)=-y\Big((\div X) Y+\nabla_XY\Big)\qfq Y\in\D_{m+1}(S,T).\label{1.8}\ee
If $y\in\WW^{1,2}(S,\R^3),$ then
\be\nabla_Xy(Y)=\int_S\<\nabla_Xy,Y\>dg\qfq Y\in\D_{m+1}(S,T).\ee
Similarly, if  $W\in \WW^{1,2}(S,T)$ and $X\in\D_{m+1}(S,T),$ then
\be D_XW(Y)=\int_S\<D_XW,Y\>dg\qflq Y\in\D_{m+1}(S,T),\label{n1.10}\ee where $D_XW$ is defined by $(\ref{1.7}).$
\end{pro}

\begin{proof} Let $x\in S$ be given.  Suppose that $E_1,$ $E_2$ is a frame normal at $x.$ Then
\beq\div(Y\otimes X)&&=\sum_{ij=1}^2D(Y\otimes X)(E_i,E_j,E_j)E_i=\sum_{ij=1}^2E_j(\<Y,E_i\>\<X,E_j\>)E_i\nonumber\\
&&=\sum_{ij=1}^2(\<D_{E_j}Y,E_i\>\<X,E_j\>+\<Y,E_i\>\<D_{E_j}X,E_j\>)E_i\nonumber\\
&&=D_XY+(\div X)Y\qaq x\qfq X,\,Y\in\D_{m+1}(S,T).\label{1.10}\eeq
Thus (\ref{1.8}) follows from definition (\ref{1.6}).

Let $y\in\WW^{1,2}(S,\R^3).$ Then $y\in\D_m'(S,\R^3)$ is viewed as a linear functional on $\D_m(S,T)$ by
$$y(Y)=\int_S\<y,Y\>dg\qflq Y\in\D_m(S,T).$$ Noting that
$$\nabla_XY=D_XY-\<Y\otimes X,\Pi\>\n,$$
from (\ref{1.6}), (\ref{n1.3}), and (\ref{1.10}), we have
\beq \nabla_Xy(Y)&&=y\Big(-\div(Y\otimes X)+\<Y\otimes X,\Pi\>\n\Big)=\int_S\<y,\,-\div(Y\otimes X)+\<Y\otimes X,\Pi\>\n\>dg\nonumber\\
&&=\int_S\<y,-\nabla_XY-(\div X)Y\>dg=-\int_\Ga\<y,Y\>\<X,\nu\>d\Ga+\int_S\<\nabla_Xy,Y\>dg\nonumber\\
&&=\int_S\<\nabla_Xy,Y\>dg\qfq Y\in\D_{m+1}(S,T).\nonumber\eeq A similar argument yields (\ref{n1.10}).
\end{proof}

Let $\E_m(S,T^k)$ be the space $\CC^m(S,T^k)$ with the convergence: A sequence $\{\var_n\}\subset\CC^m(S,T^k)$ is said to converge to some $\var_0\in\CC^m(S,T^k)$ if for any given compact set
$K\subset\subset S$ and any given $X_1,$ $\cdots,$ $X_j\in\CC^m(S,T)$ where $0\leq j\leq m,$ there holds
$$\lim_{n\rw0}\sup_{x\in K}|D_{X_1}\cdots D_{X_j}(\var_n-\var_0)|=0.$$
Denote by $\E_m'(S,T^k)$ all the continuous linear functionals on $\E_m(S,T^k).$

 We define some multiplier operations between generalized tensor  as follows.
\begin{dfn} For given $w\in\D_m'(S)$ and $f\in\E_m(S),$ $fw\in\D_m'(S)$ is defined by
\be (fw)(z)=w(fz)\qflq z\in\D_m(S);\label{1.12}\ee
for given $W\in\D_m'(S,T)$ and $f\in\E_m(S),$ $fW\in\D_m'(S,T)$ by
\be (fW)(Z)=W(fZ)\qflq Z\in\D_m(S,T);\ee
for given $U\in\D_m'(S,T^2)$ and $f\in\E_m(S),$ $fU\in\D_m'(S,T^2)$ by
\be (fU)(P)=U(fP)\qflq P\in\D_m(S,T^2);\ee
for given $W\in\D_m'(S,T)$ and $F\in\E_m(S,T),$ $(\<W,F\>)\in\D_m'(S)$ by
\be (\<W,F\>)(z)=W(zF)\qflq z\in\D_m(S);\label{n2.15}\ee
for given $w\in\D_m'(S)$ and $X\in\E_m(S,T),$ $wX\in\D_m'(S,T)$ by
\be (wX)(Z)=w(\<X,Z\>)\qflq Z\in\D_m(S,T);\label{1.16}\ee
for given $U\in\D_m'(S,T^2)$ and $R\in\E_m(S,T^2),$ $\<U,R\>\in\D_m'(S)$ by
\be (\<U,R\>)(z)=U(zR)\qflq z\in\D_m(S);\label{n1.17}\ee
for given $w\in\D_m'(S)$ and $R\in\E_m(S,T^2),$ $wR\in\D_m'(S,T^2)$  by
\be (wR)(P)=w(\<R,P\>)\qflq P\in\D_m(S,T^2);\label{1.17}\ee
for given $U\in\D_m'(S,T^2),$ $\tr U\in\D_m'(S)$  by
\be \tr U(z)=U(z\id)\qflq z\in\D_m(S);\label{1.19}\ee
for given $W\in\D_m'(S,T)$ and $R\in\E_m(S,T^2),$ $RW\in\D_m'(S,T)$  by
\be (RW)(Z)=W(R^TZ)\qflq Z\in\D_m(S,T);\label{1.19}\ee
for given $U\in\D_m'(S,T^2)$ and $R_1,$ $R_2\in\E_m(S,T^2),$ $R_1UR_2\in\D_m'(S,T^2)$  by
\be (R_1UR_2)(P)=U(R_1^TPR_2^T)\qflq P\in\D_m(S,P).\ee
\end{dfn}

We need a linear operator $Q$ (\cite{CY, CY1, Yao2017,Yao2019}) as follows. For each point
$p\in M$, the Riesz representation theorem implies that there
exists an isomorphism $Q:$ $T_pM\rw T_pM$ such that
\be \label{n2.31}
\<\a,Q\b\>=\det\left(\a,\b,\vec{n}(p)\right)\qfq\a,\,\b\in T_pM.\ee
Let $e_1,$ $e_2$ be an orthonormal basis of $T_pM$ with
positive orientation, that is,
$$\det\Big(e_1,e_2,\n(p)\Big)=1.$$
Then $Q$ can be expressed explicitly by
\be Q\a=\<\a,e_2\>e_1-\<\a,e_1\>e_2\quad\mbox{for
all}\quad\a\in T_pM.\label{n2.32}\ee
Clearly, $Q$ satisfies
$$ Q^T=-Q,\quad Q^2=-\Id.$$
Operator $Q$ plays an important role in our analysis.

F given or $U\in\D_m'(S,T),$ we define $QU\in\D_m'(S,T^2)$ by
$$(QU)(X\otimes Y)=U(X\otimes Q^TY),\quad (UQ)(X\otimes Y)=U(QX\otimes Y)\qfq X,\,Y\in\D_m(S,T).$$ Moreover, for given $W\in \D'_m(S,T),$ we define $QW\in\D'_m(S,T)$ by
\be (QW)(F)=-W(QF)\qfq F\in D_m(S,T).\label{n2x.24}\ee

\begin{pro}The following formulas hold:
\be \div(wR)=R^TDw+w\div R\qfq w\in\D_m'(S),\quad R\in\E_m(S,T^2),\label{1.22}\ee
\be \div(RW)=\<R,DW\>+\<\div R,W\>\qfq W\in\D_m'(S,T),\quad R\in\E_m(S,T^2),\label{1.23}\ee
\be UQ+QU^T=(\tr U)Q\qfq U\in\D_m'(S,T^2).\label{1.24}\ee
\end{pro}

\begin{proof} Let $x\in S$ be given.  Suppose that $E_1,$ $E_2$ is a frame normal at $x.$ Then
\beq\<R,DZ\>&&=\sum_{ij}R(E_i,E_j)\<D_{E_j}Z,E_i\>\nonumber\\
&&=\sum_{ij}\{E_j[R(E_i,E_j)\<Z,E_i\>]-DR(E_i,E_j,E_j)\<Z,E_i\>\}\nonumber\\
&&=\sum_{j}\{E_j[RZ(E_j)]-\ii_ZDR(E_j,E_j)\}\nonumber\\
&&=\div(RZ)-\<\div R,Z\>\qaq x.\nonumber\eeq
It follows from (\ref{1.5}), (\ref{1.19}), (\ref{1.16}), and (\ref{1.17}) that
\beq\div(wR)(Z)&&=-(wR)(DZ)=-w(\<R,DZ\>)=-w\Big(\div(RZ)\Big)+w(\<\div R,Z\>)\nonumber\\
&&=Dw(RZ)+(w\div R)(Z)=(R^TDw+w\div R)(Z)\nonumber\eeq for $Z\in\E_{m+1}(S,T).$

For given $z\in\D_{m+1}(S),$ we have at $x$
\beq R^TDz&&=\sum_{ij}E_i(z)R(E_j,E_i)E_j=\sum_{ij}\{E_i[zR(E_j,E_i)]-zDR(E_j,E_i,E_i)\}E_j\nonumber\\
&&=\sum_{j}\{\<\div(zR),E_j\>-z\tr\ii_{E_j}DR\}E_j\nonumber\\
&&=\div(zR)-z\div R.\label{x2.27}\eeq
By (\ref{n1.4}), (\ref{1.19}), (\ref{x2.27}), (\ref{x1.4}), and (\ref{n1.17}),
\beq\div(RW)(z)&&=-(RW)(Dz)=-W(R^TDz)\nonumber\\
&&=DW(zR)+W(z\div R)\nonumber\\
&&=(\<DW,R\>)(z)+(\<W,\div R\>)(z)\qfq z\in D_{m+1}(S).\nonumber\eeq

Finally, we shall prove (\ref{1.24}). Let $E_1,$ $E_2$ be a local frame with positive orientation. Then
$$QE_1=-E_2,\quad QE_2=E_1,\quad Q^TE_1=E_2,\quad Q^TE_2=-E_1.$$
Then
$$(UQ+QU^T)(E_1\otimes E_2)=U(QE_1\otimes E_2)+U( Q^TE_2\otimes E_1)=(\tr U)\<QE_1,E_2\>,$$
$$(UQ+QU^T)(E_2\otimes E_1)=(\tr U)\<QE_2,E_1\>,\quad(UQ+QU^T)(E_i\otimes E_i)=(\tr U)\<QE_i,E_i\> $$ for $1\leq i\leq2.$
\end{proof}

Let $S$ be of $\CC^{m+1}.$ Let $y\in D'_{m}(S,\mathbb{R}^3)$ be given. We define $w\in\D_m'(S),$ $W\in\D_m(S,T),$ $v\in\D_{m+1}(S),$ $V\in\D'_{m+1}(S,T),$ and $U\in\D'_{m+1}(S,T^2)$ by
\be w(z)=y(z\n)\qfq z\in\D_{m}(S),\quad W(Z)=y(Z)\qfq Z\in\D_{m}(S,T),\label{2.27}\ee
\be  v(z)=\frac12y(QDz)\qfq z\in\D_{m+1}(S),\label{2.28}\ee
\be V(Z)=y\Big((\div QZ)\n+\nabla\n QZ\Big)\qfq Z\in\D_{m+1}(S,T),\label{V2.29}\ee and
\be U(P)=\nabla y(\sym P)\qfq P\in\D_{m+1}(S,T^2),\label{U2.31}\ee respectively. Then
\be\sym DW+w\nabla\n=U\qiq\D'_{m+1}(S,T^2)\label{n1.25}\ee in the sense of generalized tensors.

\begin{thm}\label{t2.1}For given $y\in D'_m(S,\mathbb{R}^3),$ we have
\be v=\frac12\div QW\qfq x\in S,\label{n2x.23}\ee
\be\left\{\begin{array}{l}Dw=\nabla\n W-QV\qfq x\in S,\\
\div W=-(\tr\Pi)w+\tr U\qfq x\in S,\end{array}\right.\label{1.25}\ee
and
\be\left\{\begin{array}{l}Dv=\nabla\n V+Q\div QUQ\qfq x\in S,\\
\div V=-(\tr\Pi)v-\<Q\nabla\n,U\>\qfq x\in S,\end{array}\right.\label{1.26}\ee in the sense of generalized tensors.
\end{thm}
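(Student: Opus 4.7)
My plan is to prove each of the five identities separately by evaluating both sides as linear functionals on arbitrary smooth compactly supported test objects---scalars for the scalar equations and tangent fields for the vector equations---and unwinding the defining relations $(\ref{n1.2})$--$(\ref{1.7})$ of generalized derivatives together with the multiplier rules and the divergence formulas $(\ref{1.22})$--$(\ref{1.24})$. The recurring algebraic ingredients are $Q^{2}=-\Id$, $Q^{T}=-Q$, the parallelism $DQ=0$ on an oriented surface, and the symmetry of the shape operator $\nabla\n$ coming from the symmetry of $\Pi$. The calculations for $(\ref{1.26})$ additionally require the Codazzi equation---which supplies $\div\nabla\n=D(\tr\Pi)$, $\div(\nabla\n Q)=-QD(\tr\Pi)$, and $\<\nabla\n Q,\Pi\>=0$---and the two-dimensional Cayley--Hamilton relation $(\nabla\n)^{2}=(\tr\Pi)\nabla\n-\kappa\,\Id$, which together with $(\ref{1.24})$ yields the reduction $\nabla\n Q\nabla\n=\kappa Q$.

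The first three identities are immediate. For $(\ref{n2x.23})$ one has $\div(QW)(z)=-(QW)(Dz)=W(QDz)=y(QDz)=2v(z)$. For the first line of $(\ref{1.25})$, evaluate on a test field $X$: the left side equals $-y((\div X)\n)$, while the right side gives $y(\nabla\n X)-y((\div X)\n+\nabla\n X)$ by $Q^{2}=-\Id$ applied in $(QV)(X)=-V(QX)$ and the symmetry of $\nabla\n$ in $(\nabla\n W)(X)=W(\nabla\n X)$; the $\nabla\n X$ terms cancel. The second line of $(\ref{1.25})$ is the trace of $(\ref{n1.25})$: expanding $\tr U(z)=\nabla y(z\,\id)=y(-Dz+z(\tr\Pi)\n)$ matches $\div W(z)+(\tr\Pi)w(z)=-y(Dz)+y(z(\tr\Pi)\n)$ term by term.

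The identities in $(\ref{1.26})$ are where the real work lies. For the second line I begin with $\div V(z)=-y((\div QDz)\n)-y(\nabla\n QDz)$ and observe that the normal term vanishes, because $\div(QDz)=\tr(QD^{2}z)=0$ by symmetry of the Hessian and antisymmetry of $Q$. On the right, $(\ref{1.24})$ gives $\sym(Q\nabla\n)=\tfrac{1}{2}(\tr\Pi)Q-\nabla\n Q$; using $\<\nabla\n Q,\Pi\>=0$ and $\div(\nabla\n Q)=-QD(\tr\Pi)$ I compute $\nabla y(z\nabla\n Q)=y(Q\nabla\n Dz+zQD(\tr\Pi))$. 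The $zQD(\tr\Pi)$ piece cancels the corresponding term in $-(\tr\Pi)v(z)$, and the identity $Q\nabla\n-(\tr\Pi)Q=-\nabla\n Q$ delivers the required $-y(\nabla\n QDz)$.

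The first line of $(\ref{1.26})$ is the main obstacle. My plan is first to establish $QUQ=U-(\tr U)\,\id$ as generalized $2$-tensors: this follows from the pointwise identity $QAQ=A-(\tr A)\,\id$ for symmetric $A$ (an easy consequence of $(\ref{1.24})$), together with the fact that $U(P)=\nabla y(\sym P)$ depends on $P$ only through $\sym P$. Then $\div(QUQ)=\div U-D(\tr U)$ and $Q\div QUQ=Q\div U-QD(\tr U)$. Using $(\ref{n1.25})$ and $(\ref{1.22})$ to expand $\div U=\div\sym DW+\nabla\n Dw+wD(\tr\Pi)$, and $(\ref{1.25})$ to rewrite $\tr U=\div W+(\tr\Pi)w$, the expression $\nabla\n V+Q\div QUQ$ collapses via $\nabla\n Q\nabla\n=\kappa Q$ to $-\kappa QW+Q\div\sym DW-QD(\div W)$. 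Matching this against $Dv=\tfrac{1}{2}D(\div QW)=\tfrac{1}{2}D\<Q,DW\>$ requires the Ricci identity to commute two covariant derivatives in the expression $D_{k}\<Q,DW\>=\sum_{ij}Q_{ij}D_{k}D_{i}W_{j}$; this produces precisely the $-\kappa QW$ term on the left and reconciles the remaining tangential pieces, closing the proof.
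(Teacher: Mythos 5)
Your proposal is correct and follows essentially the same route as the paper: evaluate every identity on test objects via the duality definitions, reduce the first three equations to the product rules, and handle $(\ref{1.26})$ with the Codazzi relations, the Cayley--Hamilton identity $\nabla\n Q\nabla\n=\kappa Q$, the algebraic fact $QUQ=U-(\tr U)\id$, and a Ricci-identity commutation formula (the paper's $\div\sym DZ=D\div Z+\kappa Z-\tfrac12 QD\div QZ$), which is exactly the step you invoke at the end. The only difference is organizational (you expand $\nabla\n V$ and $Q\div QUQ$ separately and collapse, while the paper runs a single chain of substitutions starting from $\nabla\n V(Z)=QV(Q\nabla\n Z)$), so no further comment is needed.
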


\begin{proof}
By (\ref{n2x.24}) and (\ref{n1.4}) we have
$$ 2v(z)=(W+w\n)(QDz)=-(QW)(Dz)=(\div QW)(z)\qfq z\in D_{m+1}(S).$$

For $Z\in\D_{m+1}(S,T)$ and by (\ref{n1.2}) and (\ref{1.19})
\beq(Dw-\nabla\n W)(Z)&&=-w(\div Z)-W(\nabla^T\n Z)=-y\Big((\div Z)\n+\nabla\n Z\Big)\nonumber\\
&&=-V(Q^TZ)=-(QV)(Z).\nonumber\eeq

By (\ref{n1.3}), (\ref{x1.4}), and (\ref{1.17})
$$\nabla y(P)=y(-\div P+\<P,\Pi\>\n)=-W(\div P)+w(\<P,\Pi\>)=(DW+w\Pi)(P)$$
for any $P\in\D_{m+1}(S,T^2).$ Then
\be\nabla y=DW+w\Pi\qiq \D_m'(S,T^2).\label{1.27}\ee Since
$$\div(z\id)=Dz,$$ from (\ref{1.17}), (\ref{1.12}), and (\ref{1.19}), we obtain
\beq\div W(z)&&=-W(Dz)=-W\Big(\div(z\id)\Big)=DW(z\id)\nonumber\\
&&=\nabla y(z\id)-(w\Pi)(z\id)=\nabla y\Big(\sym(z\id)\Big)-w\Big((\tr\Pi)z\Big)\nonumber\\
&&=U(z\id)-(\tr\Pi)w(z)=(\tr U)(z)-(\tr \Pi)w(z)\qfq z\in\D_{m+1}(S),\nonumber\eeq that is, the second equation in (\ref{1.25}).

Next, we shall prove (\ref{1.26}). Let $x\in S$ be given.  Suppose that $E_1,$ $E_2$ is a frame normal at $x$ with positive orientation. Then $E_1=QE_2.$ Let $R$ be the curvature tensor. By the Ricci identity
$$D^2Z(X_1,X_2,X_3)=D^2Z(X_1,X_2,X_3)+R(X_2,X_3,Z,X_1)$$ for $Z,$ $X_1,$ $X_2,$ $X_3\in\D_{m+1}(S,T).$
Using the above formula, we have
\beq \<\div\sym DZ,E_2\>&&=E_1(\sym DZ)(E_2,E_1)+E_2(\sym DZ)(E_2,E_2)\nonumber\\
&&=\frac12E_1[DZ(E_2,E_1)+DZ(E_1,E_2)]+E_2[DZ(E_2,E_2)]\nonumber\\
&&=\frac12[D^2Z(E_2,E_1,E_1)+D^2Z(E_1,E_2,E_1)]+E_2(\div Z)-D^2Z(E_1,E_1,E_2)\nonumber\\
&&=E_2(\div Z)+\frac12[D^2Z(E_2,E_1,E_1)-D^2Z(E_1,E_2,E_1)]+\kappa\<Z,E_2\>\nonumber\\
&&=E_2(\div Z)+\kappa\<Z,E_2\>+\frac12E_1(\<D_{E_1}Z, E_2\>-\<D_{E_2}Z,E_1\>)\nonumber\\
&&=E_2(\div Z)+\kappa\<Z,E_2\>+\frac12\<D\div QZ,E_1\>\nonumber\\
&&=\<D\div Z+\kappa Z-\frac12QD\div QZ,\,\,E_2\>\qfq Z\in\D_{m+2}(S,T).\nonumber\eeq
A similar computation yields
$$\<\div\sym DZ,E_1\>=\<D\div Z+\kappa Z-\frac12QD\div QZ,\,\,E_1\>\qaq x.$$ Thus
\be\div\sym DZ=D\div Z+\kappa Z-\frac12QD\div QZ\qfq Z\in\D_{m+2}(S,T).\label{1.28}\ee

By (\ref{1.5}) and (\ref{1.28})
\beq(\div \sym DW)(Z)&&=-DW(\sym DZ)=W(\div\sym DZ)\nonumber\\
&&=W(D\div Z+\kappa Z-\frac12QD\div QZ)\nonumber\\
&&=(D\div W+\kappa W)(Z)-\frac12y(QD\div QZ)\nonumber\\
&&=(D\div W+\kappa W)(Z)-v(\div QZ)\nonumber\\
&&=(D\div W+\kappa W-QDv)(Z)\qfq Z\in\D_{m+2}(S,T).\nonumber\eeq
Using the relations
$$\nabla\n Q\nabla\n=\ka Q,\quad \nabla\n-(\tr \Pi) \id=Q\nabla\n Q,\quad \div\nabla\n=D(\tr\Pi),$$ we have, by (\ref{1.25}), (\ref{1.22}), and (\ref{n1.25}),
\beq \nabla\n V(Z)&&=V(\nabla\n Z)=QV(Q\nabla\n Z)=(\nabla\n W-Dw)(Q\nabla\n Z)\nonumber\\
&&=W(\nabla\n Q\nabla\n Z)-Dw\Big((\tr\Pi)QZ-\nabla\n QZ\Big)\nonumber\\
&&=[\kappa W+\nabla\n Dw-(\tr\Pi)Dw](QZ)\nonumber\\
&&=[\kappa W+\div(w\nabla\n)-w\div\nabla\n-(\tr\Pi)Dw](QZ)\nonumber\\
&&=[\kappa W+\div U-\div\sym DW-D((\tr\Pi)w)](QZ)\nonumber\\
&&=[\div U-D\div W+QDv-D((\tr\Pi)w)](QZ)\nonumber\\
&&=[Dv-Q\div U+QD(\tr U)](Z)\nonumber\\
&&=[Dv-Q\div\Big(U-(\tr U)\id\Big)](Z)\nonumber\\
&&=[Dv-Q\div QUQ](Z)\qfq Z\in\D_m(S,T),\nonumber\eeq where the formula $U-(\tr U)\id=QUQ$ has been used, that is the first equation in (\ref{1.26}).

Noting that
$$\<\Pi,Q\nabla\n\>=\<Q,\Pi\>=0,\quad\div QDz=0,\quad \div(zQ\nabla\n)=-\nabla\n QDz,$$
$$\div\Big(z(\tr\Pi)Q\Big)=-QD(z\tr\Pi)\qfq z\in\D_{m+1}(S),$$ we compute, by (\ref{1.27}), (\ref{n1.3}), (\ref{n1.17}), and (\ref{1.23}),
\beq \div V(z)&&=-V(Dz)=-y\Big((\div QDz)\n+\nabla\n QDz\Big)=-W(\nabla\n QDz)\nonumber\\
&&=W\Big(\div(zQ\nabla\n)\Big)=-DW(zQ\nabla\n)=-\nabla y(zQ\nabla\n)+w\Pi(zQ\nabla\n)\nonumber\\
&&=-\nabla y(z\sym Q\nabla\n)+\frac12\nabla y(z\nabla\n Q^T-zQ\nabla\n)+w(z\<\Pi,Q\nabla\n\>)\nonumber\\
&&=-U(zQ\nabla\n)+\frac12\nabla y(z\nabla\n Q^T-zQ\nabla\n)\nonumber\\
&&=-U(zQ\nabla\n)-\frac12\nabla y\Big(z(\tr\Pi)Q\Big)\nonumber\\
&&=-(\<U,Q\nabla\n\>)(z)-\frac12y\Big(-\div(z(\tr\Pi)Q)+z(\tr\Pi)\<Q,\Pi\>\n\Big)\nonumber\\
&&=-(\<U,Q\nabla\n\>)(z)-v(z\tr\Pi)=-(\<U,Q\nabla\n\>)(z)-((\tr\Pi)v)(z)\nonumber\eeq for $z\in\D_{m+1}(Z),$ where the formula $Q\nabla\n=(Q\nabla\n)^T+(\tr\Pi)Q$ has been used.
Thus the second equation in (\ref{1.26}) is true.
\end{proof}

\setcounter{equation}{0}
\section{Proof of Theorem \ref{t1.1}}
\def\theequation{3.\arabic{equation}}
\hskip\parindent  Let $\Om\subset M$ be  a bounded Lipschitz region with boundary $\Ga.$ Consider operator $\B:$ $\LL^2(\Om,T)\rw\LL^2(\Om,T)$ given by
$$\B W=\div DW,\quad D(\B)=\WW^{2,2}(\Om,T)\cap\WW^{1,2}_0(\Om,T).$$
Let $W\in D(\B)$ and $x\in\Om$ be fixed. Let $x\in\Om$ be fixed. Let $E_1,$ $E_2$ be a frame field normal at $x.$ By \cite[Theorem 1.26]{Yao2011}
$$\<\B W,E_i\>=\sum_{j=1}^2D(DW)(E_i,E_j,E_j)=\<\sum_jD_{E_j}^2W, E_i\>=-\<{\bf \Delta}W,E_i\>+\Ric(W,E_i)$$ at $x$ for $i=1,$  $2,$
where ${\bf\Delta}$ is the Hodge-Laplacian and $\Ric$ is the Ricci tensor. Thus
$$\B W=-{\bf \Delta}W+\ii_W\Ric.$$ By \cite{Ta} $\B$ is a self-adjoint operator with the compact resolvent. We then have a direct sum
$$\LL^2(\Om,T)=\R(\B)\oplus \N(\B),$$ where $\R(\B)$ and $\N(\B)$ are the valued field and the null space of $\B,$ respectively. Moreover, $\dim\N<\infty,$ and for any $Y\in\R(\B),$ there exists a unique
$Z\in D(\B)\cap\R(\B)$ that solves problem
$$\div DZ=Y,\quad Z|_\Ga=0$$ and satisfies
$$\|Z\|_{\WW^{2,2}(\Om,T)}\leq C\|Y\|_{\LL^2(\Om,T)}.$$

\begin{lem}\label{l3.1} There exists constant $C>0$ such that
\be\|W\|^2_{\LL^2(\Om,T)}\leq C(\|DW\|^2_{\WW^{-1,2}(\Om,T^2)}+\|W\|^2_{\LL^2(\Ga,T)})\qflq W\in\LL^2(\Om,T).\label{3.29}\ee
\end{lem}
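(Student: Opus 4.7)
The plan is to use duality combined with the spectral decomposition of $\B$ supplied just above the lemma. Since $\|W\|_{\LL^2(\Om,T)}=\sup\{|(W,Y)_{\LL^2}|:Y\in\LL^2(\Om,T),\,\|Y\|_{\LL^2}\le 1\}$, it suffices to bound the pairing $(W,Y)_{\LL^2}$ uniformly in the unit $\LL^2$-ball. Split $Y=Y_1+Y_2$ with $Y_1\in\R(\B)$ and $Y_2\in\N(\B)$, each bounded in $\LL^2$ by $\|Y\|_{\LL^2}$.

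For the range part, invoke the solvability: there is a unique $Z\in D(\B)\cap\R(\B)$ with $\div DZ=Y_1$, $Z|_\Ga=0$, $\|Z\|_{\WW^{2,2}}\le C\|Y_1\|_{\LL^2}$. Write $(W,Y_1)_{\LL^2}=\int_\Om\langle W,\div DZ\rangle\,dg$ and integrate by parts. The key structural observation is that $Z|_\Ga=0$ forces $D_\tau Z|_\Ga=0$ for every tangential $\tau$, so $DZ|_\Ga$ retains only its normal component, namely $DZ|_\Ga=\nu\otimes D_\nu Z$. Introduce a bounded extension operator $H^{1/2}(\Ga,T^2)\to\WW^{1,2}(\Om,T^2)$ and decompose $DZ=P_0+\Psi$ with $P_0\in\WW^{1,2}_0(\Om,T^2)$, $\Psi\in\WW^{1,2}(\Om,T^2)$, and $\|P_0\|_{\WW^{1,2}}+\|\Psi\|_{\WW^{1,2}}\le C\|Y_1\|_{\LL^2}$. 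Using the generalized-gradient identities from Section 2, the $P_0$-contribution reads $-DW(P_0)$, bounded by $\|DW\|_{\WW^{-1,2}}\|P_0\|_{\WW^{1,2}}\le C\|DW\|_{\WW^{-1,2}}\|Y_1\|_{\LL^2}$, while the $\Psi$-contribution together with the classical boundary term from the initial integration by parts is controlled by $\|W\|_{\LL^2(\Ga)}\|Y_1\|_{\LL^2}$.

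For the null-space part, since $\dim\N(\B)<\infty$ with basis $\{e_j\}\subset\WW^{2,2}\cap\WW^{1,2}_0$ of smooth vector fields vanishing on $\Ga$, decompose $Y_2=\sum a_j e_j$ with $|a_j|\le\|Y_2\|_{\LL^2}$. For each $e_j$ orthogonal in $\LL^2$ to the finite-dimensional space of parallel vector fields on $\ol\Om$, a Bogovski-type construction yields $P_j\in\WW^{1,2}_0(\Om,T^2)$ with $\div P_j=e_j$ and a fixed bound $\|P_j\|_{\WW^{1,2}}\le C_j$, so $(W,e_j)_{\LL^2}=-DW(P_j)$ is bounded by a fixed multiple of $\|DW\|_{\WW^{-1,2}}$; any basis components along parallel fields are handled by a separate finite-dimensional argument using the boundary trace $\|W\|_{\LL^2(\Ga)}$. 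The main obstacle is the rigorous handling of the $\Psi$-term in the range part: since $\Psi\notin\WW^{1,2}_0$, a direct $\WW^{-1,2}$-pairing with $DW$ is unavailable, so one either exploits the thin-collar structure inherited from $DZ|_\Ga=\nu\otimes D_\nu Z$ to choose $\Psi$ with small support and absorb residual $\|W\|_{\LL^2}$-terms into the left-hand side via Ehrling-type interpolation, or appeals to a Ne\v cas--Lions compactness/contradiction argument based on the compact embedding $\LL^2\hookrightarrow\WW^{-1,2}$ applied to differences of a hypothetical counterexample sequence.
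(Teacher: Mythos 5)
Your skeleton (duality against the unit $\LL^2$-ball, the spectral decomposition $\LL^2(\Om,T)=\R(\B)\oplus\N(\B)$, solving $\div DZ=Y_1$ with $\|Z\|_{\WW^{2,2}}\leq C\|Y_1\|_{\LL^2}$ and integrating by parts) is the same as the paper's, but both places where you stop short are exactly where the content of the lemma sits, and neither is closed. For the range part, the decomposition $DZ=P_0+\Psi$ is a detour that creates the problem you then cannot solve: the paper never tries to force the test tensor into $\WW^{1,2}_0$. It simply keeps the boundary term from the integration by parts, writing $W(Y_1)=-\int_\Om\<DW,DZ\>dg+\int_\Ga DZ(W,\nu)d\Ga$, bounds the interior term by $\|DW\|_{\WW^{-1,2}}\|DZ\|_{\WW^{1,2}}$ and the boundary term by $\|DZ\|_{\LL^2(\Ga)}\|W\|_{\LL^2(\Ga)}\leq C\|Z\|_{\WW^{2,2}}\|W\|_{\LL^2(\Ga)}$ via the trace theorem. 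The term $\|W\|_{\LL^2(\Ga,T)}$ on the right-hand side of (\ref{3.29}) exists precisely to absorb this boundary contribution; there is nothing to ``exploit a thin collar'' for, and no Ehrling absorption is needed. Your fallback of a compactness/contradiction argument here is also not obviously non-circular: to extract a strongly $\LL^2$-convergent subsequence from a counterexample sequence you already need an estimate of the type being proved, which is why the paper proves the estimate on $\R(\B)$ first by the direct argument and only then invokes compactness.

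The second, more serious gap is the null-space part. You decompose the test function $Y$ and try to represent each $e_j\in\N(\B)$ as $\div P_j$ with $P_j\in\WW^{1,2}_0(\Om,T^2)$; this has genuine compatibility obstructions (orthogonality to parallel fields, i.e.\ to the kernel of $D$), and the ``separate finite-dimensional argument using the boundary trace'' for the residual components is exactly the nontrivial rigidity statement -- that $DV=0$ together with $V|_\Ga=0$ forces $V=0$ -- which you assert rather than prove. The paper instead decomposes $W$ itself as $W=\sum_i\a_i\Phi_i+W_0$ with $W_0\in\R(\B)$ and proves, by a contradiction argument whose compactness input is precisely the Step-1 estimate on $\R(\B)$, the norm equivalence $(\sum_i\a_i^2)^{1/2}+\|DW_0\|_{\WW^{-1,2}}\leq C(\|DW\|_{\WW^{-1,2}}+\|W\|_{\LL^2(\Ga)})$; the uniqueness step is that a parallel field vanishing on $\Ga$ has constant length zero. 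To repair your proof you should replace the Bogovski construction by this decomposition of $W$ and the accompanying compactness--uniqueness step; as written, the proposal identifies the right tools but does not complete either half of the argument.
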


\begin{proof}
{\bf Step 1.}\,\,\,First, we claim that there is a constant $C>0$ such that
\be\|W\|^2_{\LL^2(\Om,T)}\leq C(\|DW\|^2_{\WW^{-1,2}(\Om,T^2)}+\|W\|^2_{\LL^2(\Ga,T)})\qflq W\in\R(\B).\label{3.30}\ee Let $W\in\R(\B)$ be given. For any $Y\in\WW^{1,2}(\Om,T),$ consider the direct sum
$$ Y=Y_0+Y_1,\quad Y_0\in\N(\B),\quad Y_1\in D(\B).$$ Then there is $Z\in D(\B)\cap\R(\B)$ such that $Y_1=\div DZ.$ We have
\beq W(Y)&&=W(Y_1)=\int_\Om\<W,\div DZ\>dg=-\int_\Om\<DW,DZ\>dg+\int_\Ga DZ(W,\nu)d\Ga\nonumber\\
&&\leq\|DW\|_{\WW^{-1,2}(\Om,T)}\|DZ\|_{\WW^{1,2}(\Om,T)}+\|DZ\|_{\LL^2(\Ga,T)}\|W\|_{\LL^2(\Ga,T)}\nonumber\\
&&\leq C(\|DW\|_{\WW^{-1,2}(\Om,T)}+\|W\|_{\LL^2(\Ga,T)})\|Z\|_{\WW^{2,2}(\Om,T)}\nonumber\\
&&\leq C(\|DW\|_{\WW^{-1,2}(\Om,T)}+\|W\|_{\LL^2(\Ga,T)})\|Y_1\|_{\LL^2(\Om,T)}\nonumber\\
&&\leq C(\|DW\|_{\WW^{-1,2}(\Om,T)}+\|W\|_{\LL^2(\Ga,T)})\|Y\|_{\LL^2(\Om,T)}\qflq Y\in\LL^2(\Om,T),\nonumber\eeq that is, (\ref{3.30}).

{\bf Step 2.}\,\,\,Let $\Phi_1,$ $\cdots,$ $\Phi_k$ be an orthonormal basis of $\N(\B),$ where $k=\dim\N(\B).$ Now we claim there are constants $\si>0$ and $C>0$ such that
\beq&&\si[(\sum_{i=1}^k\a_k^2)^{1/2}+\|DW\|_{\WW^{-1,2}(\Om,T^2)}]\nonumber\\
&&\leq\|\sum_{i=1}^k\a_kD\Phi_i+DW\|_{\WW^{-1,2}(\Om,T^2)}+\|W\|_{\LL^2(\Ga,T)}\nonumber\\
&&\leq C[(\sum_{i=1}^k\a_k^2)^{1/2}+\|DW\|_{\WW^{-1,2}(\Om,T^2)}+\|W\|_{\LL^2(\Ga,T)}]\label{3.31}\eeq for all $(\a_1,\cdots,\a_k)\in\R^k$ and $W\in\R(\B).$ Clearly, the right hand side of (\ref{3.31}) holds true. We prove the left hand side by contradiction. Suppose the left hand side of (\ref{3.31}) is not true. Then there exist $\{\,(\a_{j1},\cdots,\a_{jk})\,\}\subset\R^k$ and $W_j\in\R(\B)$ such that
$$(\sum_{i=1}^k\a_{ji}^2)^{1/2}+\|DW_j\|_{\WW^{-1,2}(\Om,T^2)}=1,$$
$$\|\sum_{i=1}^k\a_{ji}D\Phi_i+DW_j\|_{\WW^{-1,2}(\Om,T^2)}+\|W_j\|_{\LL^2(\Ga,T)}\leq\frac1j\qflq j\geq1.$$
We assume that $(\a_{j1},\cdots,\a_{jk})\rw(\a_{1},\cdots,\a_{k}).$ Then
$$W_j\rw 0\qiq\LL^2(\Ga,T),\quad DW_j\rw U\qiq \WW^{-1,2}(\Om,T^2).$$ By Step 1
$$W_j\rw W_0\qiq\LL^2(\Om,T)$$ with $W_0|_\Ga=0.$ Thus
$$(\sum_{i=1}^k\a_{i}^2)^{1/2}+\|DW_0\|_{\WW^{-1,2}(\Om,T^2)}=1,\quad D(\sum_i\a_i\Phi_i+W_0)=0,\quad (\sum_i\a_i\Phi_i+W_0)|_\Ga=0.$$
Then $\sum_i\a_i\Phi_i+W_0=0$ and, so
$$(\a_{1},\cdots,\a_{k})=0,\quad W_0=0.$$ That is a contradiction.

{\bf Step 3.}\,\,\,Let $W\in\LL^2(\Om,T)$ be given. Let
$$W=\sum_i\a_i\Phi_i+W_0,\quad \a_i=(W,\Phi_i)_{\LL^2(\Om,T)},\quad W_0\in\R(\B).$$
For given $Y\in\WW^{1,2}(\Om,T),$ we have by Steps 1 and 2
\beq W(Y)&&=\sum_i\a_i\b_i+W_0(Y)\leq|\a||\b|+C(\|DW_0\|_{\WW^{-1,2}(\Om,T^2)}+\|W_0\|_{\LL^2(\Ga,T)})\|Y\|_{\LL^2(\Om,T)}\nonumber\\
&&\leq C(|\a|+\|DW_0\|_{\WW^{-1,2}(\Om,T^2)}+\|W_0\|_{\LL^2(\Ga,T)})\|Y\|_{\LL^2(\Om,T)}\nonumber\\
&&\leq C(\|DW\|_{\WW^{-1,2}(\Om,T^2)}+\|W\|_{\LL^2(\Ga,T)})\|Y\|_{\LL^2(\Om,T)},\label{3.32}\eeq where
$$\a=(\a_1,\cdots,\a_k),\quad\b=(\b_1,\cdots,\b_k),\quad \b_i=(Y,\Phi_i)_{\LL^2(\Om,T)}.$$

Finally, (\ref{3.29}) follows from (\ref{3.32}).
\end{proof}

\begin{lem}\label{3l.2} The following estimate holds.
\be \|v\|_{[\WW^{1,2}(\Om)]'}\leq C(\|Dv\|_{[\WW^{2,2}(\Om,T)]'}+\|v\|_{[\WW^{3/2,2}(\Ga)]'})\qfq v\in[\WW^{1,2}(\Om)]'.\label{3x.5}\ee
\end{lem}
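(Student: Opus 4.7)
The plan is to prove the dual analogue of Lemma \ref{l3.1}: instead of controlling an $\LL^2$ tensor $W$ by $DW$ in a negative Sobolev norm plus boundary data, I would control a scalar generalized function $v\in[\WW^{1,2}(\Om)]'$ by $Dv$ in $[\WW^{2,2}(\Om,T)]'$ plus a suitable boundary pairing. The method is duality through the Dirichlet Laplacian. It is considerably simpler than Lemma \ref{l3.1} because the scalar Laplacian with homogeneous Dirichlet data on a connected smooth domain has trivial kernel, so there is no analogue of $\N(\B)$ to handle and the three-step argument of Lemma \ref{l3.1} collapses to a single step.

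The key construction is the following. Given an arbitrary $y\in\WW^{1,2}(\Om)$, I would solve the Dirichlet problem
$$\Delta\phi=y\qiq\Om,\qquad \phi|_\Ga=0.$$
Since $S$ is $\CC^5$ and $y\in\WW^{1,2}(\Om)$, standard elliptic regularity supplies a unique $\phi\in\WW^{3,2}(\Om)\cap\WW^{1,2}_0(\Om)$ with $\|\phi\|_{\WW^{3,2}(\Om)}\leq C\|y\|_{\WW^{1,2}(\Om)}$. Setting $X=D\phi\in\WW^{2,2}(\Om,T)$ one has $\div X=\Delta\phi=y$, while the vanishing of $\phi$ on $\Ga$ forces the tangential components of $X$ to vanish there, so $\langle X,\nu\rangle=\partial_\nu\phi\in\WW^{3/2,2}(\Ga)$ with $\|\partial_\nu\phi\|_{\WW^{3/2,2}(\Ga)}\leq C\|y\|_{\WW^{1,2}(\Om)}$.

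Next I would invoke Green's identity in the generalized sense. For smooth $v$ and $X$,
$$\int_\Om v\,\div X\,dg=-\int_\Om\<Dv,X\>\,dg+\int_\Ga v\<X,\nu\>\,d\Ga,$$
and by density this extends to the defining formula
$$v(\div X)=-Dv(X)+\<v,\<X,\nu\>\>_\Ga\qfq X\in\WW^{2,2}(\Om,T),$$
where the boundary pairing is the one referenced by $\|v\|_{[\WW^{3/2,2}(\Ga)]'}$. Applying this to the $X$ built in the previous step and using the estimates there,
$$|v(y)|\leq\|Dv\|_{[\WW^{2,2}(\Om,T)]'}\|X\|_{\WW^{2,2}(\Om,T)}+\|v\|_{[\WW^{3/2,2}(\Ga)]'}\|\partial_\nu\phi\|_{\WW^{3/2,2}(\Ga)}\leq C\bigl(\|Dv\|_{[\WW^{2,2}(\Om,T)]'}+\|v\|_{[\WW^{3/2,2}(\Ga)]'}\bigr)\|y\|_{\WW^{1,2}(\Om)}.$$
Taking the supremum over $y\in\WW^{1,2}(\Om)$ with unit norm yields (\ref{3x.5}).

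The main obstacle is not the elliptic regularity step (which is routine on a $\CC^5$ surface) but the consistent interpretation of the boundary functional $\<v,\cdot\>_\Ga$, since a generic $v\in[\WW^{1,2}(\Om)]'$ admits no classical trace on $\Ga$. I would handle this in the same spirit as the definitions in Section 2: take the Green identity itself as the definition of $v|_\Ga$ as a continuous linear functional on $\WW^{3/2,2}(\Ga)$. Well-posedness follows because any two $X_1,X_2\in\WW^{2,2}(\Om,T)$ sharing a common divergence and a common normal trace differ by a field approximable by compactly supported test fields, on which $v(\div\cdot)+Dv(\cdot)=0$ by the very definition of $Dv$ in (\ref{n1.2}).
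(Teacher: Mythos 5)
Your proposal is correct and follows essentially the same route as the paper: solve the Dirichlet problem for the Laplacian with the arbitrary test function $f\in\WW^{1,2}(\Om)$ as right-hand side, use elliptic regularity to get a $\WW^{3,2}$ potential, and integrate by parts so that $v$ is paired with $Dv$ against $Dz\in\WW^{2,2}(\Om,T)$ in the interior and with the normal derivative of $z$ in $\WW^{3/2,2}(\Ga)$ on the boundary. Your additional remark on how the boundary functional $\<v,\cdot\>_\Ga$ is to be interpreted is a point the paper leaves implicit, but the argument itself is the same.
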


\begin{proof} For given $f\in\WW^{1,2}(\Om),$ we solve problem
$$-\Delta z=f\qfq x\in\Om,\quad z|_\Ga=0.$$ Then
$$\|z\|_{\WW^{3,2}(\Om)}\leq C\|f\|_{\WW^{1,2}(\Om)}.$$
Thus we have
\beq(v,f)_{\LL^2(\Om)}&&=-\int_\Ga v\<Dz,\nu\>d\Ga+(Dv,Dz)_{\LL^2(\Om,T)}\nonumber\\
&&\leq\|v\|_{[\WW^{3/2,2}(\Ga)]'}\|Dz\|_{\WW^{3/2,2}(\Ga,T)}+\|Dv\|_{[\WW^{2,2}(\Om,T)]'}\|Dz\|_{\WW^{2,2}(\Ga,T)}\nonumber\\
&&\leq C(\|Dv\|_{[\WW^{2,2}(\Om,T)]'}+\|v\|_{[\WW^{3/2,2}(\Ga)]'}))\|f\|_{\WW^{1,2}(\Om)}.\nonumber\eeq
Thus (\ref{3x.5}) follows.
\end{proof}

\begin{lem}\label{3l.3} Let $\Om\subset M$ be a region and  boundary $\Ga$ consist of finite many closed curves. Then
\be\|\div\div P\|_{[\WW^{2,2}(\Om)]'}\leq C\|P\|_{\LL^2(\Om,T^2)}\qfq P\in\LL^2(\Om,T^2).\label{3n.21}\ee
\end{lem}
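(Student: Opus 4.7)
The argument I have in mind is a pure duality estimate. For any test function $\var\in\D(\Om)$, iterating the generalized divergence definitions $(\ref{1.5})$ and $(\ref{n1.4})$ with $P\in\LL^2(\Om,T^2)$ identified with the functional $Q\mapsto\int_\Om\<P,Q\>\,dg$ on smooth $2$-tensor test fields gives
$$\div\div P(\var)=-\div P(D\var)=P(D^2\var)=\int_\Om\<P,D^2\var\>\,dg.$$
Since $D^2\var\in\LL^2(\Om,T^2)$ whenever $\var\in\WW^{2,2}(\Om)$, the right-hand side makes sense for every $\var\in\WW^{2,2}(\Om)$, and I would adopt it as the defining formula for $\div\div P$ as a functional on $\WW^{2,2}(\Om)$.

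With this extension in place, the desired bound is one application of Cauchy--Schwarz together with the definition of the Sobolev norm:
$$|\div\div P(\var)|\leq\|P\|_{\LL^2(\Om,T^2)}\|D^2\var\|_{\LL^2(\Om,T^2)}\leq\|P\|_{\LL^2(\Om,T^2)}\|\var\|_{\WW^{2,2}(\Om)}.$$
Taking the supremum over $\var$ in the unit ball of $\WW^{2,2}(\Om)$ yields $(\ref{3n.21})$ with a universal constant $C$, and the estimate extends from $P$ smooth to general $P\in\LL^2(\Om,T^2)$ by density, since both sides are continuous in $P$ in the $\LL^2$-topology.

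The one delicate point, and where the hypothesis that $\Ga$ consists of finitely many closed curves enters, is verifying that the functional just defined really is a consistent extension of the generalized distribution $\div\div P\in\D'(\Om)$ originally specified only on compactly supported test functions. I would check this by approximating $P$ in $\LL^2$ by smooth tensors, integrating by parts twice against a general $\var\in\WW^{2,2}(\Om)$, and inspecting the boundary terms that survive along $\Ga$. Those terms split into a normal-derivative piece (harmless, since it pairs $\var|_\Ga$ or $\partial_\nu\var|_\Ga$ against a smooth boundary quantity) and a tangential-derivative piece of the form $\int_\Ga(\cdot)\partial_\tau\var\,d\Ga$. Because each component of $\Ga$ is a closed curve without endpoints, one further integration by parts around that component disposes of the tangential piece with no endpoint residue, so the two definitions of $\div\div P$ agree wherever both are meaningful. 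I expect this boundary bookkeeping, rather than the concluding Cauchy--Schwarz step, to be the only obstacle, and even it is essentially routine.
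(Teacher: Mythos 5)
Your concluding Cauchy--Schwarz step is of course valid for the functional $z\mapsto\int_\Om\<P,D^2z\>dg$, but the point you yourself flag as ``delicate'' is exactly where the argument breaks, and it cannot be repaired in the way you sketch. The norm in (\ref{3n.21}) is the dual norm over all of $\WW^{2,2}(\Om)$, not over $\WW^{2,2}_0(\Om)$, so the value of $\div\div P$ on test functions that do not vanish on $\Ga$ matters. For smooth $P$, two integrations by parts give
$$(\div\div P,z)_{\LL^2(\Om)}=\int_\Ga\big(z\<\div P,\nu\>-\<PDz,\nu\>\big)d\Ga+\int_\Om\<P,D^2z\>dg,$$
so your functional and the honest one differ by the boundary integral, which is generically nonzero: the piece pairing $P(\nu,\nu)|_\Ga$ against $\pl_\nu z|_\Ga$ does not disappear merely because it ``pairs against a smooth boundary quantity,'' and integrating the tangential piece once more around each closed component of $\Ga$ only trades $\pl_\tau z$ for $z$ times a tangential derivative of $P(\nu,\tau)$ --- it does not annihilate the term. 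Hence the two candidate extensions of $\div\div P$ coincide only as distributions on $\CC^\infty_0(\Om)$, i.e.\ in $[\WW^{2,2}_0(\Om)]'$, and your argument proves the estimate only for that weaker norm, which is not the one used later (e.g.\ in the pairings $(\div QF,z_1)_{\LL^2}$ of Theorem \ref{t3.3}, where the test functions are not compactly supported).

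The actual content of the lemma, and essentially the whole of the paper's proof, is the control of these boundary integrals: the paper bounds them by $\|\div P\|_{[\WW^{3/2,2}(\Ga,T)]'}\|z\|_{\WW^{3/2,2}(\Ga)}+\|P\|_{[\WW^{1/2,2}(\Ga,T^2)]'}\|Dz\|_{\WW^{1/2,2}(\Ga,T)}$, invokes the hypothesis that $\Ga$ consists of finitely many closed curves only to identify $[\WW^{\lam,2}(\Ga,T^k)]'$ with $\WW^{-\lam,2}(\Ga,T^k)$, and then passes to general $P\in\LL^2$ by density. In your version that hypothesis does no real work (the interior Cauchy--Schwarz estimate needs nothing about $\Ga$), which is itself a warning that part of the statement has been bypassed. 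To complete your proof you must either estimate the two boundary trace terms by $C\|P\|_{\LL^2(\Om,T^2)}\|z\|_{\WW^{2,2}(\Om)}$ --- the nontrivial step --- or establish that the boundary-free functional is the one actually meant and used throughout Section 3, which the paper's own computation shows is not the case.
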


\begin{proof}
Assume $P\in \WW^{2,2}(\Om,T^2).$  From \cite[Lemma 2.3]{CY}, we have
\be \div(PZ)=\<P,DZ\>+\<\div P, Z\>\qfq x\in\Om,\quad Z\in\WW^{1,2}(\Om,T).\label{3x.22}\ee
For given $z\in\WW^{2,2}(\Om),$ it follows from (\ref{3x.22}) that
\beq(\div\div P, z)_{\LL^2(\Om)}&&=\int_\Om[\div(z\div P)-\<\div P,Dz\>]dg\nonumber\\
&&=\int_\Ga (z\<\div P,\nu\>-\<PDz,\nu\>)d\Ga+\int_\Om\<P,D^2z\>dg\nonumber\\
&&\leq C(\|\div P\|_{[\WW^{3/2,2}(\Ga,T)]'}\|z\|_{\WW^{3/2,2}(\Ga)}\nonumber\\
&&\quad+\|P\|_{[\WW^{1/2,2}(\Ga,T^2)]'}\|Dz\|_{\WW^{1/2,2}(\Ga,T)}+\|P\|_{\LL^2(\Om,T^2)}\|D^2z\|_{\LL^2(\Om,T^2)}\nonumber\\
&&\leq C\|P\|_{\LL^2(\Om,T^2)}\|z\|_{\WW^{2,2}(\Om)}.\label{3n.22}\eeq
In addition, the assumption that boundary $\Ga$ consists of finite many closed curves implies
\be[\WW^{\lam,2}(\Ga,T^k)]'=\WW^{-\lam,2}(\Ga,T^k),\label{3x.24}\ee where $\lam>0$ is a real number and $k\geq0$ is an integer.

Since $\WW^{2,2}(\Om,T^2)$ is dense in $[\WW^{2,2}(\Om,T^2)]',$ (\ref{3n.21}) follows from (\ref{3n.22}) and (\ref{3x.24}).
\end{proof}

Suppose that for given $\varepsilon>0$ small, $\eta_0\in\CC^\infty_0(-\infty,\infty)$  is given satisfying
$$0\leq\eta_0\leq1;\quad \eta_0=0\qfq s\leq \varepsilon/2;\quad \eta_0=1\qfq s\geq\varepsilon.$$ Set
$$\eta(\a(t,s))=s\eta_0(s)\qfq (t,s)\in(0,a)\times(-b_0,b_1).$$
For given $\varepsilon>0$ small, let
$$\Om_{-\varepsilon}=\{\,\a(t,s)\,|\,(t,s)\in(0,a]\times(-\varepsilon,b_1)\,\}.$$
Suppose
$$X\in\CC^m(\Om_{-\varepsilon},T)$$ is given in \cite[Section 2, (2.34)]{CY1}.  We define
\be\L_0V
=e^{-\g\kappa}[(\div Q\nabla\n V-\eta\<V,QX\>)QX+(\div V-\eta\<V,\nabla\n X\>)\nabla\n X],\ee
\beq\L_0^*V&&=\nabla\n Q D(e^{-\g\kappa}\<QX,V\>)-D(e^{-\g\kappa}\<\nabla\n X,V\>)\nonumber\\
&&\quad-\eta e^{-\g\kappa}(\<V,QX\>QX+\<V,\nabla\n X\>\nabla\n X),\eeq for $V\in\LL^2(\Om_{-\varepsilon},T).$ Let $m\geq0$ be an integer. By \cite[Lemma 2.3]{CY1} we have
\beq (W,\L_0V)_{\LL^2(\Om,T)}&&=(V,\L_0^*W)_{\LL^2(\Om,T)}\nonumber\\
&&\quad+\int_{\pl\Om}(\pounds_1\<V,\nu\>\<W,\nu\>-\pounds_2\<V,QX\>\<W,QX\>)d\Ga\label{3x.7}\eeq
for $V,$ $W\in T\Om,$ where
\be\pounds_1=\frac{e^{-\g\kappa}}{\<X,\nu\>}\Pi(X,X),\quad \pounds_2=\frac{e^{-\g\kappa}}{\<X,\nu\>}\Pi(Q\nu,Q\nu) \qfq x\in\pl\Om.\label{n2.17x}\ee

Consider problems
\be\left\{\begin{array}{l}\L_0V=F,\\
\<V,QX\>|_{\Ga_{-\varepsilon}}=p,\quad\<V,\nu\>|_{\Ga_{b_1}}=q\end{array}\right.\label{3x.9}\ee and
\be\left\{\begin{array}{l}\L_0^*V=F,\\
\<V,\nu\>|_{\Ga_{-\varepsilon}}=p,\quad\<V,QX\>|_{\Ga_{b_1}}=q,\end{array}\right.\label{3x.10}\ee respectively, where $F\in\WW^{m,2}(\Om_{-\varepsilon},T),$ $p\in\WW^{m,2}(\Ga_{-\varepsilon}),$ and
$q\in\WW^{m,2}(\Ga_{b_1})$ are given.
It follows from \cite[Theorems 3.1 and 3.2]{CY1} that
\begin{thm}\label{3t.1} Let $S$ be of $\CC^{m+3}.$ There exists a unique solution $W\in\WW^{m,2}(\Om_{-\varepsilon},T)$ to problem $(\ref{3x.9}),$ or $(\ref{3x.10})$ satisfying
\beq &&\|V\|_{\WW^{m,2}(\Om_{-\varepsilon},T)}+\|V\|_{\WW^{m,2}(\pl\Om_{-\varepsilon}, T)}\nonumber\\
&&\leq C( \|F\|_{\WW^{m,2}(\Om_{-\varepsilon},T)}+ \|q\|_{\WW^{m,2}(\Ga_{b_1})}+ \|p\|_{\WW^{m,2}(\Ga_{-\varepsilon})}).\label{V3.2}\eeq
\end{thm}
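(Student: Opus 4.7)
\textbf{Proof plan for Theorem \ref{t1.1}.} The strategy is to close a coupled system of estimates for the four quantities $(W, w, V, v)$ furnished by Theorem \ref{t2.1}, where the data is $U$ and the boundary trace $W|_{\pl S}$. The centerpiece is a pointwise algebraic identity: on the two-dimensional surface $S$, the symmetric part of $DW$ equals $U - w\Pi$ by (\ref{n1.25}), while the antisymmetric part must be a scalar multiple of $Q$, whose scalar is read off through $v = \tfrac12\div QW$ in (\ref{n2x.23}). This yields in the generalized-tensor sense
\[ DW = U - w\Pi \pm vQ, \]
so $\|DW\|_{\WW^{-1,2}(S,T^2)} \leq C(\|U\|_{\LL^2} + \|w\|_{[\WW^{1,2}(S)]'} + \|v\|_{[\WW^{1,2}(S)]'})$, and Lemma \ref{l3.1} upgrades this to an $\LL^2$-estimate for $W$ that also picks up the boundary term $\|W\|_{\LL^2(\pl S, T)}$ of (\ref{1.3}).

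From (\ref{1.25}) the identity $Dw = \nabla\n W - QV$ combined with Lemma \ref{3l.2} produces $\|w\|_{[\WW^{1,2}(S)]'} \leq C(\|W\|_{\LL^2(S,T)} + \|V\|_{\LL^2(S,T)} + \|W\|_{\LL^2(\pl S,T)})$, where the weak $w$-trace appearing in Lemma \ref{3l.2} is absorbed into $\|W\|_{\LL^2(\pl S,T)}$ through the same identity restricted to $\pl S$. A parallel calculation using the first line of (\ref{1.26}), $Dv = \nabla\n V + Q\div QUQ$, together with Lemmas \ref{3l.2} and \ref{3l.3} (the latter to handle the $\div QUQ$ term, which only lies in $[\WW^{2,2}]'$), gives $\|v\|_{[\WW^{1,2}(S)]'} \leq C(\|V\|_{\LL^2(S,T)} + \|U\|_{\LL^2(S,T^2)} + \|W\|_{\LL^2(\pl S,T)})$.

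The heart of the proof is the $\LL^2$-estimate on $V$, which I plan to extract from Theorem \ref{3t.1} by duality on the mixed-type subdomain $\Om_{-\varepsilon}$. For arbitrary $F \in \CC^\infty_0(\Om_{-\varepsilon}, T)$, use Theorem \ref{3t.1} to solve $\L_0^* \Phi = F$ with homogeneous boundary data, yielding $\|\Phi\|_{\WW^{m,2}(\Om_{-\varepsilon},T)} \leq C\|F\|_{\WW^{m,2}(\Om_{-\varepsilon},T)}$ for sufficiently large $m$. Identity (\ref{3x.7}) gives $(V,F)_{\LL^2} = (\L_0 V, \Phi)_{\LL^2} + (\text{boundary terms})$. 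Substituting the defining formula of $\L_0$ together with the Theorem \ref{t2.1} identities $\div V = -(\tr\Pi)v - \<Q\nabla\n, U\>$ and $\div Q\nabla\n V = \div\div QUQ$ (the latter obtained by applying $Q$ and then $\div$ to the first equation of (\ref{1.26}) and using $\div QDv = 0$ on a surface), the $\div\div QUQ$ piece is controlled by Lemma \ref{3l.3} against $\|\Phi\|_{\WW^{2,2}}$, while the remaining pieces contribute $(\|U\|_{\LL^2} + \|v\|_{[\WW^{1,2}]'} + \|W\|_{\LL^2(\pl S,T)}) \|F\|_{\WW^{m,2}}$. Duality on $\Om_{-\varepsilon}$, combined with the already-known hyperbolic rigidity from \cite{Yao2019} on the complementary strip $S\setminus\Om_{-\varepsilon}$, then yields
\[ \|V\|_{\LL^2(S,T)} \leq C\big(\|U\|_{\LL^2(S,T^2)} + \|v\|_{[\WW^{1,2}(S)]'} + \|W\|_{\LL^2(\pl S,T)}\big). \]

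Combining the four estimates produces a linear system in the four unknown norms $(\|W\|_{\LL^2}, \|w\|_{[\WW^{1,2}]'}, \|V\|_{\LL^2}, \|v\|_{[\WW^{1,2}]'})$ whose forcing depends only on $\|U\|_{\LL^2(S,T^2)}$ and $\|W\|_{\LL^2(\pl S,T)}$; a standard absorption (using that the $v$ and $V$ terms appear on each other's right-hand side with constants which can be made arbitrarily small by shrinking $\varepsilon$) closes the loop and delivers (\ref{1.3}). The main obstacle I foresee is the bookkeeping of the boundary terms in the duality step: one must ensure that every such contribution is controllable by $\|W\|_{\LL^2(\pl S,T)}$ alone (no separate $w$-trace or higher-regularity trace of $y$ should appear on the right of (\ref{1.3})), and the degeneracy of $\L_0$ near $\Ga_0$ must be handled uniformly by exploiting that the cutoff $\eta$ vanishes on $(-\varepsilon/2, \varepsilon/2)$ so that Theorem \ref{3t.1} applies without deterioration as $\varepsilon \downarrow 0$.
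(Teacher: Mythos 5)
Your proposal does not address the statement it was meant to prove. The statement in question is Theorem \ref{3t.1}: existence, uniqueness, and the estimate (\ref{V3.2}) for the first-order boundary value problems (\ref{3x.9}) and (\ref{3x.10}) associated with the mixed-type operators $\L_0$ and $\L_0^*$ on $\Om_{-\varepsilon}$. What you have written is a proof plan for Theorem \ref{t1.1}, the global rigidity estimate (\ref{1.3}), and in it you explicitly invoke Theorem \ref{3t.1} as a black box ("use Theorem \ref{3t.1} to solve $\L_0^*\Phi=F$ with homogeneous boundary data"). Relative to the assigned task this is circular: the solvability and the $\WW^{m,2}$ interior-plus-boundary estimate for $\L_0$, $\L_0^*$ is precisely what has to be established, and nothing in your text argues for it. In the paper this theorem is not proved by any of the machinery you describe; it is obtained directly by citing Theorems 3.1 and 3.2 of the companion work [CY1], after checking that the operator $\L_0$ defined in (\ref{x3.1})--(\ref{n3.2}) (with the weight $e^{-\g\kappa}$, the vector field $X$ from [CY1, Section 2, (2.34)], and the cutoff $\eta$) and the boundary conditions on $\Ga_{-\varepsilon}$ and $\Ga_{b_1}$ coincide with those treated there.

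A genuine proof of Theorem \ref{3t.1} would have to engage with the structure of the first-order system itself: its characteristics in the region where $\kappa<0$, energy or characteristic-method estimates in the hyperbolic strip, elliptic estimates where $\kappa>0$, the matching across the degeneracy curve $\Ga_0$ where $D\kappa\neq0$, and the propagation of $\WW^{m,2}$ regularity up to the boundary with the trace estimate on $\pl\Om_{-\varepsilon}$ that appears on the left of (\ref{V3.2}). None of this appears in your plan. (Separately, even read as a sketch of Theorem \ref{t1.1}, your closing step differs from the paper — the paper closes the loop through Theorem \ref{t3.3}, Lemma \ref{l3.4}, and a compactness--uniqueness argument exploiting uniqueness for the elliptic and hyperbolic limit problems, not by making constants small as $\varepsilon\downarrow0$; the constants coming from Theorem \ref{3t.1} have no smallness in $\varepsilon$, so that absorption is not justified — but the primary defect is that the statement actually posed is never proved.)
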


We define linear operators $\L_0:$ $\WW^{m,2}(\Om_{-\varepsilon},T)\rw\WW^{m,2}(\Om_{-\varepsilon},T)$ by
\be\L_0V
=e^{-\g\kappa}[(\div Q\nabla\n V)QX+(\div V)\nabla\n X]-\C V,\label{x3.1}\ee
\beq D(\L_0)=\{\,V\in \WW^{m,2}(S,T)\big|&&\,\div V\in \WW^{m,2}(\Om_{-\varepsilon}), \div Q\nabla\n V\in \WW^{m,2}(\Om_{-\varepsilon})\nonumber\\
&&\,\<V,QX\>|_{\Ga_{-\varepsilon}}=\<V,\nu\>|_{\Ga_{b_1}}=0\,\},\nonumber\eeq
where
\be\C V=\eta e^{-\g\kappa}(\<V,QX\>QX+\<V,\nabla\n X\>\nabla\n X),\quad \eta(x)=s\eta_0(s)\label{n3.1}\ee  for $x=\a(t,s)\in\Om_{-\varepsilon}.$
Moreover, set
\be\L_0^*V=\nabla\n Q D(e^{-\g\kappa}\<QX,V\>)-D(e^{-\g\kappa}\<\nabla\n X,V\>)-\C V.\label{n3.2}\ee
$$ D(\L_0^*)=\{\,V\in \WW^{m,2}(S,T)\big|\,\L_0^*V\in\WW^{m,2}(\Om_{-\varepsilon},T),\,\<V,\nu\>|_{\Ga_{-\varepsilon}}=\<V,QX\>|_{\Ga_{b_1}}=0\,\}.$$

By Theorem \ref{3t.1} and \cite[Theorems 3.3 and 3.4]{CY1} we have the following.
\begin{thm}\label{3t.2}
$(i)$\,\,\,Operators $\L_0$ and $\L_0^*$ have bounded inverses on $\WW^{m,2}(\Om_{-\varepsilon},T)$ with the estimates
\beq&&\|\L_0^{-1}V\|_{\WW^{m,2}(\Om_{-\varepsilon},T)}+\|\L_0^{-1}V\|_{\WW^{m,2}(\pl\Om_{-\varepsilon}, T)}\quad\mbox{or}\nonumber\\
&&\|{\L_0^*}^{-1}V\|_{\WW^{m,2}(\Om_{-\varepsilon},T)}+\|{\L_0^*}^{-1}V\|_{\WW^{m,2}(\pl\Om_{-\varepsilon}, T)}\leq C\|V\|_{\WW^{m,2}(\Om_{-\varepsilon},T)}\eeq for $V\in\WW^{m,2}(\Om_{-\varepsilon},T).$

$(ii)$\,\,\,Operator $\L^{-1}_0\C$ and ${\L_0^*}^{-1}\C:$ $\WW^{m,2}(\Om_{-\varepsilon},T)\rw\WW^{m+1,2}(\Om_{-\varepsilon},T)$ are compact.
\end{thm}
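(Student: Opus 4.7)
For (i), the plan is to invoke Theorem \ref{3t.1} with homogeneous boundary data: given $F\in\WW^{m,2}(\Om_{-\varepsilon},T)$, choose $p=0$ and $q=0$ in (\ref{3x.9}) (respectively (\ref{3x.10})). Theorem \ref{3t.1} then produces a unique $V\in\WW^{m,2}(\Om_{-\varepsilon},T)$ solving $\L_0V=F$ (respectively $\L_0^*V=F$) with the homogeneous traces built into $D(\L_0)$ (respectively $D(\L_0^*)$), and the estimate (\ref{V3.2}) specializes to the claimed norm bound. Hence $\L_0^{-1}$ and ${\L_0^*}^{-1}$ are bounded operators on $\WW^{m,2}(\Om_{-\varepsilon},T)$, proving (i).

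For (ii), the crucial structural feature of $\C$ is that the cutoff $\eta$ in (\ref{n3.1}) satisfies $\eta\equiv0$ on $\{s\leq\varepsilon/2\}$, so $\C V$ is supported in the non-degenerate region $\Om_\eta:=\{\a(t,s):s\geq\varepsilon/2\}\cap\Om_{-\varepsilon}$, where $|\ka|$ is bounded below by (\ref{kappa+}) and (\ref{kappa-}) and $\nabla\n$ is invertible. Setting $W=\L_0^{-1}\C V$, the equation $\L_0W=\C V$, projected in $\Om_\eta$ onto the linearly independent pair $(QX,\nabla\n X)$, decomposes into the scalar first-order system
\[\div Q\nabla\n W=f_1,\qquad \div W=f_2\qiq\Om_\eta,\]
with $f_1,f_2$ depending linearly on $V$ and $W$ through smooth coefficients, hence $f_1,f_2\in\WW^{m,2}$. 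Since $(\div,\div Q\nabla\n)$ is elliptic wherever $\nabla\n$ is invertible (its principal symbol is injective there), interior elliptic regularity yields $W\in\WW^{m+1,2}_{\rm loc}(\Om_\eta)$. On $\{s<\varepsilon/2\}$ the equation degenerates to $\L_0W=0$ with homogeneous trace on $\Ga_{-\varepsilon}$ and interior trace on $\{s=\varepsilon/2\}$ already in $\WW^{m+1/2,2}$; applying Theorem \ref{3t.1} at regularity level $m+1$ then propagates $\WW^{m+1,2}$ regularity throughout $\Om_{-\varepsilon}$, giving the bound $\|W\|_{\WW^{m+1,2}}\leq C\|V\|_{\WW^{m,2}}$.

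To upgrade boundedness to compactness, I would bootstrap the preceding regularity argument once more: with $W\in\WW^{m+1,2}$ now available, the $W$-dependence in $f_1,f_2$ gains a derivative, and a higher-regularity application of Theorem \ref{3t.1} yields $\|W\|_{\WW^{m+2,2}}\leq C\|V\|_{\WW^{m,2}}$. The compactness then follows from the Rellich--Kondrachov compact embedding of $\WW^{m+2,2}(\Om_{-\varepsilon})$ into $\WW^{m+1,2}(\Om_{-\varepsilon})$; the same plan applies verbatim to ${\L_0^*}^{-1}\C$ using $\L_0^*$ in place of $\L_0$. The main obstacle is to establish the ellipticity of $(\div,\div Q\nabla\n)$ on $\Om_\eta$ rigorously and to implement the bootstrap gaining the second derivative, which requires extracting the extra regularity from the $V$-dependence of $f_1,f_2$ (which does not improve a priori, since $V$ itself is not upgraded) via careful use of the smooth coefficients packaged inside $\C$ and the structure of Theorem \ref{3t.1} at the higher regularity level.
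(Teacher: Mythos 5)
Your part (i) is exactly the paper's route: the paper obtains Theorem \ref{3t.2} by citing Theorem \ref{3t.1} (take $p=q=0$ in (\ref{3x.9}) and (\ref{3x.10}), so that the homogeneous traces in $D(\L_0)$, $D(\L_0^*)$ are satisfied and (\ref{V3.2}) gives the bound), and your argument for that part is correct. Part (ii), however, is not proved in the paper at all --- it is imported from \cite[Theorems 3.3 and 3.4]{CY1} --- and your attempted reconstruction has two genuine problems.

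First, the ellipticity criterion is wrong as stated. The principal symbol of the pair $(\div,\div Q\nabla\n)$ at a covector $\xi$ annihilates $V=Q\xi$ exactly when $\Pi(Q\xi,Q\xi)=0$ (use $Q\nabla\n Q=\nabla\n-(\tr\Pi)\Id$), so the system is elliptic precisely where $\kappa>0$, not wherever $\nabla\n$ is invertible: in $S^-$ one has $\det\nabla\n\neq0$ and yet the symbol has a kernel along asymptotic directions. You are rescued on $\supp\eta\subset\{s\geq\varepsilon/2\}\subset S^+$, but your subsequent ``propagation'' of $\WW^{m+1,2}$ regularity through $\{s<\varepsilon/2\}$ crosses the parabolic curve $\Ga_0$ and the hyperbolic region, where there is no elliptic gain and where you cannot re-pose problem (\ref{3x.9}) with data on the artificial interface $\{s=\varepsilon/2\}$: Theorem \ref{3t.1} prescribes one condition on $\Ga_{-\varepsilon}$ and one on $\Ga_{b_1}$, and the well-posedness of this mixed-type problem is tied to where those conditions sit.

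Second, and decisively, the bootstrap to $\WW^{m+2,2}$ cannot succeed. $\C$ is multiplication by smooth tensors, so $\C V$ has exactly the regularity of $V$, namely $\WW^{m,2}$ with no gain; $\L_0$ is first order, so $W=\L_0^{-1}\C V$ can gain at most one derivative over its right-hand side even in the elliptic region. A bound $\|W\|_{\WW^{m+2,2}}\leq C\|V\|_{\WW^{m,2}}$ is therefore unobtainable, and without it the Rellich--Kondrachov step from $\WW^{m+2,2}$ into $\WW^{m+1,2}$ is unavailable; a one-derivative gain yields only boundedness of $\L_0^{-1}\C:\WW^{m,2}\rw\WW^{m+1,2}$, which is weaker than the asserted compactness. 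You acknowledge this obstruction in your final sentence but do not resolve it, so part (ii) remains unproved by your route; the compactness must come from a genuinely compact factor in the composition (this is what \cite[Theorems 3.3 and 3.4]{CY1} supply), not from extra Sobolev regularity of $\L_0^{-1}\C V$.
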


Define operators $\L$ and $\L^*:$ $\D'_m(S,T)\rightarrow \D'_{m+1}(S,T)$ by
$$\L V=\L_0 V+\C V\quad\mbox{and}\quad \L^*V=\L_0^*V+\C V,$$ respectively.
Given $Z_0\in \E_m(\Om_{-\varepsilon},T)$, consider problem
\be
\begin{cases}
\L^*Z+\nabla\n Dz=Z_0\qfq x\in\Om_{-\varepsilon},\\
-\Delta z=e^{-\g\kappa}\Pi(X,Z)\tr\Pi\qfq x\in\Om_{-\varepsilon},\\
\pl_\nu z\big|_{\pl\Om_{-\varepsilon}}=
\<Z,\nu\>\big|_{\Ga_{-\varepsilon}}=\<Z,QX\>\big|_{\Ga_{b_1}}=0.
\end{cases}\label{1.33}
\ee

We define operator $\A:$ $\LL^2(\Om,T)\rw\WW^{1,2}(\Om,T)$ by
$$\A Z=\nabla\n Dz\qfq Z\in\LL^2(\Om,T),$$ where $z\in \WW^{2,2}(\Om,T)$ is given by
$$\left\{\begin{array}{l}-\Delta z=e^{-\g\kappa}\<Z,\nabla\n X\>\tr\Pi\qfq x\in\Om,\\
\pl_\nu z=0\quad\mbox{on}\quad x\in\Ga.\end{array}\right.$$ Then
$$\A:\quad \LL^2(\Om,T)\rw\LL^2(\Om_{-\varepsilon},T)$$ is compact. Thus by Theorem \ref{3t.2}
$${\L_0^*}^{-1}(\C+\A):\quad \LL^2(\Om,T)\rw\LL^2(\Om_{-\varepsilon},T)$$ is also compact. Set
$$\N=\{\,W\in\LL^2(\Om_{-\varepsilon},T)\,|\,W+{\L_0^*}^{-1}(\C+\A)W=0\,\},$$
$$\N_*=\{\,W\in\LL^2(\Om_{-\varepsilon},T)\,|\,W+\L_0^{-1}(\C+\A^*)W=0\,\}.$$ Then
$$\dim \N<\infty,\quad \dim \N_*<\infty.$$

\begin{pro}\label{p3.1}  Let $S$ be of $\CC^{m+3}.$
Problem $(\ref{1.33})$ admits a unique solution $Z\perp \N$ in $\LL^2(\Om_{-\varepsilon},T)$ if and only if  $Z_0\perp \N_*$ in $\LL^2(\Om_{-\varepsilon},T).$ Moreover, there exists $C>0$ such that if
$Z_0\in\WW^{2,2}(\Om_{-\varepsilon},T)$ and $(Z,z)$ solves problem $(\ref{1.33})$ with $Z\perp \N,$  then $(Z,z)\in\WW^{m,2}(\Om_{-\varepsilon},T)\times\WW^{m+2,2}(\Om_{-\varepsilon})$ satisfying
\be\|Z\|^2_{\WW^{m,2}(\Om_{-\varepsilon},T)}+\|z\|^2_{\WW^{m+2,2}(\Om_{-\varepsilon})}+\|Z\|^2_{\WW^{m,2}(\pl\Om_{-\varepsilon},T)}\leq C\|Z_0\|^2_{\WW^{m,2}(\Om_{-\varepsilon},T)}.\label{3.2}\ee
\end{pro}

\begin{proof} For given $Z_0\in\LL^2(\Om_{-\varepsilon},T),$ it is easy to check that $Z\in\LL^2(\Om_{-\varepsilon},T)$ solves problem (\ref{1.33}) if and only if $Z$ satisfies
$$Z+{\L_0^*}^{-1}(\C+\A)Z={\L_0^*}^{-1}Z_0\qiq\LL^2(\Om_{-\varepsilon},T).$$
By Fredholm's theorem the above problem has a solution $Z\in\LL^2(\Om_{-\varepsilon},T)$ if and only if $Z_0\perp \N_*$ in $\LL^2(\Om_{-\varepsilon},T).$

Let $Z_0\in\WW^{m,2}(\Om_{-\varepsilon},T).$ By \cite[Theorems 3.1 and 3.2]{CY1}  $Z\in\WW^{m,2}(\Om_{-\varepsilon},T).$ Thus by the regularity of the elliptic problem $z\in\WW^{m+2,2}(\Om_{-\varepsilon})$ and the estimates
$$\|z\|^2_{\WW^{m+2,2}(\Om_{-\varepsilon})}\leq C\|Z\|^2_{\WW^{m,2}(\Om_{-\varepsilon},T)}$$
follows. Moreover, using \cite[Lemma 3.2 and Theorem 2.2]{CY1}, we have
$$\|Z\|^2_{\WW^{m,2}(\Om_{-\varepsilon},T)}+\|Z\|^2_{\WW^{m,2}(\pl\Om_{-\varepsilon},T)}\leq C\|Z_0\|^2_{\WW^{m,2}(\Om_{-\varepsilon},T)}.$$ Thus (\ref{3.2}) follows.
\end{proof}

Let $\Phi_1,$ $\cdots,$ $\Phi_k$ be an orthonormal basis of $\N,$ where $k=\dim\N.$ Define the projection operator ${\P}:$ $[\WW^{m,2}(\Om_{-\varepsilon},T)]'\rw\N$ by
$${\P}V=\sum_{j=1}^k(V,\Phi_j)_{\LL^2(\Om_{-\varepsilon})}\Phi_k\qfq V\in[\WW^{m,2}(\Om_{-\varepsilon},T)]'.$$

 Consider problem
 \be \left\{\begin{array}{l}Dv=\nabla\n V+F\qfq x\in \Om_{-\varepsilon},\\
\div V=-(\tr\Pi)v+f\qfq x\in \Om_{-\varepsilon},\\
\<V,QX\>|_{\Ga_{-\varepsilon}}=p,\quad\<V,\nu\>|_{\Ga_{b_1}}=q.\end{array}\right.\label{3.11n}\ee

We have the following.

\begin{thm}\label{t3.3} Let $S$ be of $\CC^5.$ Suppose that $(v,V)$ solves problem $(\ref{3.11n}).$ Then there exists a $C>0$ such that
\beq&&\|V\|_{[\WW^{2,2}(\Om_{-\varepsilon},T)]'}+\|v\|_{[\WW^{1,2}(\Om_{-\varepsilon})]'}\leq C\Big(\|{\P}V\|_{[\WW^{2,2}(\Om_{-\varepsilon},T)]'}+I(F,f,p,q)\Big),\quad\label{3x.25}\eeq where
\beq I(F,f,p,q)&&=\|\div QF\|_{[\WW^{2,2}(\Om_{-\varepsilon})]'}+\|F\|_{[\WW^{2,2}(\Om_{-\varepsilon},T)]'}+\|f\|_{[\WW^{2,2}(\Om_{-\varepsilon})]'}\nonumber\\
&&\quad
+\|p\|_{\WW^{-2,2}(\Ga_{-\varepsilon})}+\|q\|_{\WW^{-2,2}(\Ga_{b_1})}.\nonumber\eeq
\end{thm}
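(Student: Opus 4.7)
The plan is a duality argument built on Proposition~\ref{p3.1} and the adjoint boundary identity (\ref{3x.7}). Given any test field $Z_0\in\WW^{2,2}(\Om_{-\varepsilon},T)$, I would split $Z_0=Z_0^{\perp}+P^*Z_0$, where $P^*$ is the $\LL^2$-projection onto the finite-dimensional obstruction space $\N_*$, and apply Proposition~\ref{p3.1} with datum $Z_0^{\perp}$ to obtain $(Z,z)\in\WW^{2,2}(\Om_{-\varepsilon},T)\times\WW^{4,2}(\Om_{-\varepsilon})$ solving (\ref{1.33}) and satisfying (\ref{3.2}). The pairing of $V$ with $P^*Z_0$ is finite-dimensional and is the source of the $\|\P V\|_{[\WW^{2,2}]'}$ term in the final estimate, with the equivalence between norms on $\N$ and $\N_*$ absorbed into the constant.

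For the main piece I would write
$$(V,Z_0^{\perp})_{\LL^2}=(V,\L^*Z)_{\LL^2}+(V,\nabla\n\,Dz)_{\LL^2}$$
and transfer $\L^*$ to $\L$ via (\ref{3x.7}), which extends from $\L_0$ to $\L=\L_0+\C$ since the pointwise multiplier $\C$ is self-adjoint. Under the boundary conditions of (\ref{1.33}) and (\ref{3.11n}), the resulting boundary pieces reduce to traces involving $p\<Z,QX\>$ on $\Ga_{-\varepsilon}$ and $q\<Z,\nu\>$ on $\Ga_{b_1}$ with smooth coefficients; via (\ref{3x.24}) and the boundary trace in (\ref{3.2}) they are controlled by $(\|p\|_{\WW^{-2,2}(\Ga_{-\varepsilon})}+\|q\|_{\WW^{-2,2}(\Ga_{b_1})})\|Z_0\|_{\WW^{2,2}}$. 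The key algebraic simplification is to use the primal system (\ref{3.11n}) to eliminate derivatives of $V$ from $\L V$: applying $Q$ to $Dv=\nabla\n V+F$ and taking the divergence (using $\div Q\,Dv=0$ by symmetry of the Hessian) yields $\div Q\nabla\n V=-\div QF$, and combining with $\div V=-(\tr\Pi)v+f$ gives
$$\L V=e^{-\g\kappa}\bigl[-(\div QF)QX+(f-(\tr\Pi)v)\nabla\n X\bigr].$$
Therefore $(\L V,Z)_{\LL^2}$ is a sum of three pairings bounded against $\|Z\|_{\WW^{2,2}}$ by $\|\div QF\|_{[\WW^{2,2}]'}$, $\|f\|_{[\WW^{2,2}]'}$, and $\|v\|_{[\WW^{1,2}]'}$ (using $\WW^{2,2}\hookrightarrow\WW^{1,2}$). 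For $(V,\nabla\n Dz)_{\LL^2}$, symmetry of the Weingarten map gives $(\nabla\n V,Dz)=(Dv,Dz)-(F,Dz)$, and integration by parts together with $\pl_\nu z|_{\pl\Om_{-\varepsilon}}=0$ and the Poisson equation in (\ref{1.33}) reduces $(Dv,Dz)_{\LL^2}$ to the admissible pairing $(v,e^{-\g\kappa}\Pi(X,Z)\tr\Pi)_{\LL^2}$.

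Taking the supremum over $\|Z_0\|_{\WW^{2,2}}\le 1$ would then produce
$$\|V\|_{[\WW^{2,2}(\Om_{-\varepsilon},T)]'}\le C\bigl(\|\P V\|_{[\WW^{2,2}]'}+I(F,f,p,q)+\|v\|_{[\WW^{1,2}]'}\bigr).$$
To eliminate $\|v\|_{[\WW^{1,2}]'}$ from the right I would invoke Lemma~\ref{3l.2}: from $Dv=\nabla\n V+F$ one has $\|Dv\|_{[\WW^{2,2}(\Om_{-\varepsilon},T)]'}\le C(\|V\|_{[\WW^{2,2}]'}+\|F\|_{[\WW^{2,2}]'})$, and the boundary trace $\|v\|_{[\WW^{3/2,2}(\Ga)]'}$ is extracted by projecting $Dv=\nabla\n V+F$ onto the tangent direction of each closed boundary component and integrating along it. The main obstacle is the circular appearance of $\|V\|_{[\WW^{2,2}]'}$ and $\|v\|_{[\WW^{1,2}]'}$ on both sides of the inequality: closing the loop demands either careful tracking of small constants through (\ref{V3.2}) and (\ref{3.2}), or a compactness-contradiction argument analogous to Step~2 of Lemma~\ref{l3.1} in which a hypothetical diverging sequence is ruled out by extracting a limit in $\N$ that contradicts the normalisation.
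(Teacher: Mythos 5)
Your setup is the same as the paper's: dualize against test fields $Z_0\perp\N_*$, solve the adjoint problem (\ref{1.33}) via Proposition \ref{p3.1}, transfer $\L^*$ to $\L$ through the identity (\ref{3x.7}), use the primal system (\ref{3.11n}) to reduce $\L V$ to $G=-e^{-\g\kappa}[(\div QF)QX+(\rho v-f)\nabla\n X]$, and recover $\|V\|_{[\WW^{2,2}]'}$ by Riesz representation on the closed subspace $\Xi$ together with the finite-dimensional piece $\P V$. Up to that point the proposal matches the paper.

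The genuine gap is in how you treat the $v$-dependent terms. You bound the pairing of $(\rho v)\nabla\n X$ against $Z$ by $\|v\|_{[\WW^{1,2}]'}$ and separately reduce $(V,\nabla\n Dz)_{\LL^2}$ to $(Dv,Dz)_{\LL^2}=(v,e^{-\g\kappa}\Pi(X,Z)\tr\Pi)_{\LL^2}$, ending with $\|V\|\leq C(\|\P V\|+I+\|v\|_{[\WW^{1,2}]'})$ and an admitted circularity with $\|v\|_{[\WW^{1,2}]'}\leq C(\|V\|+\|F\|)$. But the auxiliary Poisson equation in (\ref{1.33}) is built precisely so that these two contributions cancel identically: with $z_2=\rho e^{-\g\kappa}\<Z,\nabla\n X\>$, the $G$-pairing produces $-(v,z_2)_{\LL^2}$, while the Neumann condition $\pl_\nu z|_{\pl\Om_{-\varepsilon}}=0$ and $-\Delta z=e^{-\g\kappa}\Pi(X,Z)\tr\Pi$ give $(Dv,Dz)_{\LL^2}=+(v,z_2)_{\LL^2}$, so their sum vanishes and no $\|v\|$ term ever appears on the right of the $V$-estimate. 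Your two proposed repairs do not replace this cancellation: tracking constants cannot make $C^2<1$ since the constants in (\ref{V3.2}) and (\ref{3.2}) are not small, and a compactness-contradiction argument fails here because the normalization lives in $[\WW^{2,2}(\Om_{-\varepsilon},T)]'$, where the available bounds give no compactness, and the limit problem has the nontrivial kernel $\N$ that is only neutralized by the $\P V$ term. A secondary (fixable) weakness: for $\|v\|_{[\WW^{1,2}]'}$ the paper avoids Lemma \ref{3l.2} and its boundary trace entirely by solving a Neumann problem $\Delta w=z_0$, $\pl_\nu w=0$ and pairing $(v,z_0)=-(\nabla\n V+F,Dw)$; your route through tangential integration along the boundary components only determines $v$ up to a constant on each component, which you would still have to control.
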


\begin{proof} From (\ref{3.11n}) we have
\be \left\{\begin{array}{l}\div Q\nabla \n V=-\div QF\qfq x\in \Om_{-\varepsilon},\\
\div V=-\rho v+f\qfq x\in \Om_{-\varepsilon},\end{array}\right.\label{3x.26}\ee where $\rho=\tr\Pi.$
It follows from (\ref{x3.1}), (\ref{n3.1}),  and (\ref{3x.26}) that
\be\L_0V+\C V=G,\label{3x.27}\ee where
$$G=-e^{-\g\kappa}[(\div QF)QX+(\rho v-f)\nabla\n X].$$

Let
\be\Xi=\{\, Z\in\WW^{2,2}(\Om_{-\varepsilon},T)\,|\,Z\perp\N_*\,\mbox{in $\LL^2(\Om_{-\varepsilon},T)$\,}\}.\label{3n.27}\ee For given $Z_0\in\Xi,$ we solve problem (\ref{1.33}) to have the solution $(Z, z)\in\WW^{2,2}(\Om_{-\varepsilon},T)\times\WW^{4,2}(\Om_{-\varepsilon}).$ Let
 $$z_1=e^{-\g\kappa}\<Z,QX\>,\quad z_2=\rho e^{-\g\kappa}\<Z,\nabla\n X\>,\quad z_3=e^{-\g\kappa}\<Z,\nabla\n X\>.$$ By (\ref{1.33}), (\ref{3x.7}), and the first equation in
 (\ref{3.11n}), we  have
 \beq(V,Z_0)_{\LL^2(\Om_{-\varepsilon},T)}&&=(V,\L_0^*Z+\C Z+\nabla\n Dz)_{\LL^2(\Om_{-\varepsilon},T)}=(G, Z)_{\LL^2(\Om_{-\varepsilon},T)}\nonumber\\
 &&\quad+(Dv-F,Dz)_{\LL^2(\Om_{-\varepsilon},T)}+\int_{\Ga_{-\varepsilon}}\pounds_2p\<Z,QX\>d\Ga\nonumber\\
 &&\quad-\int_{\Ga_{b_1}}\pounds_1q\<Z,\nu\>d\Ga\nonumber\\
 &&=-(\div QF,z_1)_{\LL^2(\Om_{-\varepsilon})}+(f, z_3)_{\LL^2(\Om_{-\varepsilon})}\nonumber\\
 &&\quad-(F,Dz)_{\LL^2(\Om_{-\varepsilon},T)}-(v,z_2)_{\LL^2(\Om_{-\varepsilon})}+(Dv,Dz)_{\LL^2(\Om_{-\varepsilon})}\nonumber\\
 &&\quad+\int_{\Ga_{-\varepsilon}}\pounds_2p\<Z,QX\>d\Ga-\int_{\Ga_{b_1}}\pounds_1q\<Z,\nu\>d\Ga.\nonumber\eeq
Noting that
$$-(v,z_2)_{\LL^2(\Om_{-\varepsilon})}+(Dv,Dz)_{\LL^2(\Om_{-\varepsilon})}=0,$$
by (\ref{3.2}) with $m=2$ and Lemma \ref{3l.3}  we obtain
\beq&&|(V,Z_0)_{\LL^2(\Om_{-\varepsilon},T)}|\leq\|\div QF\|_{[\WW^{2,2}(\Om_{-\varepsilon})]'}\|z_1\|_{\WW^{2,2}(\Om_{-\varepsilon})}\nonumber\\
&&\quad+\|f\|_{[\WW^{2,2}(\Om_{-\varepsilon})]'}\|z_2\|_{\WW^{2,2}(\Om_{-\varepsilon})}+\|F\|_{[\WW^{2,2}(\Om_{-\varepsilon},T)]'}\|Dz\|_{\WW^{2,2}(\Om_{-\varepsilon},T)}\nonumber\\
&&\quad+C(\|p\|_{\WW^{-2,2}(\Ga_{-\varepsilon})}+\|q\|_{\WW^{-2,2}(\Ga_{b_1})})\|Z\|_{\WW^{2,2}(\pl\Om_{-\varepsilon},T)}\nonumber\\
&&\leq CI(F,f,p,q)\|Z_0\|_{\WW^{2,2}(\Om_{-\varepsilon},T)}\qflq Z_0\in\Xi.\label{n3x.28}\eeq

Since $\Xi$ is a closed subspace of $\WW^{2,2}(\Om_{-\varepsilon},T),$ (\ref{n3x.28}) shows that there exists a function in $\Xi,$ denoted by ${\B}V,$ such that
$$(V,Z_0)_{\LL^2(\Om_{-\varepsilon},T)}=({\B}V,Z_0)_{\WW^{2,2}(\Om_{-\varepsilon},T)}\qflq Z_0\in\Xi.$$

Let $\II$ be the canonical map from $[\WW^{2,2}(\Om_{-\varepsilon},T)]'\rw\WW^{2,2}(\Om_{-\varepsilon},T).$ Let $V_0={\II}^{-1}{\B}V.$ Then $V_0\in[\WW^{2,2}(\Om_{-\varepsilon},T)]'$ satisfies
$$(V-V_0,Z_0)_{\LL^2(\Om_{-\varepsilon},T)}=0\qflq Z_0\in\Xi.$$ Thus
\be V-V_0\in\N_*.\label{3x.28}\ee Therefore,
$$V={\P} V+V_0\in[\WW^{2,2}(\Om_{-\varepsilon},T)]',$$ where ${\P}V=V-V_0.$ By (\ref{n3x.28}) we obtain
\beq\|V\|_{[\WW^{2,2}(\Om_{-\varepsilon},T)]'}&&\leq\|{\P} V\|_{[\WW^{2,2}(\Om_{-\varepsilon},T)]'}+\|V_0\|_{[\WW^{2,2}(\Om_{-\varepsilon},T)]'}\nonumber\\
&&=\|{\P} V\|_{[\WW^{2,2}(\Om_{-\varepsilon},T)]'}+\sup_{Z_0\in\WW^{2,2}(\Om_{-\varepsilon},T), \|Z_0\|_{\WW^{2,2}(\Om_{-\varepsilon},T)}=1}}(V,Z_0)_{\LL^2(\Om_{-\varepsilon},T)\nonumber\\
&&\leq \|{\P} V\|_{[\WW^{2,2}(\Om_{-\varepsilon},T)]'}+CI(F,f,p,q).\label{3x.29}\eeq

For given $z_0\in\WW^{1,2}(\Om_{-\varepsilon}),$ we solve problem
$$\Delta w=z_0\qfq x\in\Om_{-\varepsilon},\quad \<Dw,\nu\>|_{\pl\Om_{-\varepsilon}}=0.$$ Then
$$\|w\|_{\WW^{3,2}(\Om_{-\varepsilon})}\leq C\|z_0\|_{\WW^{1,2}(\Om_{-\varepsilon})}\qflq z_0\in\WW^{1,2}(\Om_{-\varepsilon}).$$
By the first equation in (\ref{3.11n}) we have
\beq(v,z_0)_{\LL^2(\Om_{-\varepsilon})}&&=-(Dv,Dw)_{\LL^2(\Om_{-\varepsilon},T)}=-(\nabla\n V+F,Dw)_{\LL^2(\Om_{-\varepsilon},T)}\nonumber\\
&&\leq C(\|V\|_{[\WW^{2,2}(\Om_{-\varepsilon},T)]'}+\|F\|_{[\WW^{2,2}(\Om_{-\varepsilon},T)]'})\|z_0\|_{\WW^{1,2}(\Om_{-\varepsilon})}\nonumber\eeq for all $z_0\in\WW^{1,2}(\Om_{-\varepsilon}),$ that yields
\be \|v\|_{[\WW^{1,2}(\Om_{-\varepsilon})]'}\leq C(\|V\|_{[\WW^{2,2}(\Om_{-\varepsilon},T)]'}+\|F\|_{[\WW^{2,2}(\Om_{-\varepsilon},T)]'}).\label{n3x.31}\ee

Thus (\ref{3x.25}) follows from (\ref{3x.29}) and (\ref{n3x.31}).
\end{proof}

\begin{thm}\label{t3.4} Let
$$U=\sym DW+w\Pi\qfq y=W+w\n.$$ Then there exists $C>0,$ independent of $y=W+w\n,$ such that
\be\|W\|_{\LL^2(\Om_{-\varepsilon},T)}+\|w\|_{[\WW^{1,2}(\pl\Om_{-\varepsilon})]'}\leq C(\|U\|_{\LL^2(\Om_{-\varepsilon},T^2)}+\|W\|_{\LL^2(\pl\Om_{-\varepsilon},T)}).\label{3.26}\ee
\end{thm}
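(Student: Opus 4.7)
\textbf{Proof plan for Theorem \ref{t3.4}.}

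The idea is to pass from the pair $(W,w)$ to the auxiliary pair $(v,V)$ defined in $(\ref{2.28})$--$(\ref{V2.29})$, apply the solvability/estimate Theorem \ref{t3.3} to that transformed system, and finally recover $(W,w)$ from $(v,V)$ using the differential identities proved in Theorem \ref{t2.1} together with Lemmas \ref{l3.1} and \ref{3l.2}.

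\emph{Step 1: reduction to the system for $(v,V)$.} Given $y=W+w\n$ with strain tensor $U=\sym DW+w\Pi$, Theorem \ref{t2.1} shows that $(v,V)$ satisfies
\[
Dv=\nabla\n V+Q\div QUQ,\qquad \div V=-(\tr\Pi)v-\<Q\nabla\n,U\>
\]
on $\Om_{-\varepsilon}$, which is precisely problem $(\ref{3.11n})$ with $F=Q\div QUQ$, $f=-\<Q\nabla\n,U\>$ and boundary data $p=\<V,QX\>|_{\Ga_{-\varepsilon}}$, $q=\<V,\nu\>|_{\Ga_{b_1}}$ extracted from $V$.

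\emph{Step 2: estimate on $I(F,f,p,q)$.} Using $Q^2=-\Id$ one has $\div QF=-\div\div(QUQ)$, and Lemma \ref{3l.3} yields $\|\div\div(QUQ)\|_{[\WW^{2,2}(\Om_{-\varepsilon})]'}\leq C\|U\|_{\LL^2}$. The terms $\|F\|_{[\WW^{2,2}]'}$ and $\|f\|_{[\WW^{2,2}]'}$ are controlled by $C\|U\|_{\LL^2}$ because $\div:\LL^2\to[\WW^{1,2}]'$ is bounded and $\WW^{2,2}\hookrightarrow\WW^{1,2}$. For the boundary pieces $p,q$, I use the definitions $(\ref{2.28})$--$(\ref{V2.29})$: integrating by parts in $V(Z)$ against a suitable lift of a test $1$-form on $\pl\Om_{-\varepsilon}$, one expresses the traces of $V$ in terms of $W|_{\pl\Om_{-\varepsilon}}$ plus interior contributions controlled by $U$. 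This gives $\|p\|_{\WW^{-2,2}(\Ga_{-\varepsilon})}+\|q\|_{\WW^{-2,2}(\Ga_{b_1})}\leq C(\|W\|_{\LL^2(\pl\Om_{-\varepsilon},T)}+\|U\|_{\LL^2})$, so altogether $I(F,f,p,q)\leq C(\|U\|_{\LL^2}+\|W\|_{\LL^2(\pl\Om_{-\varepsilon},T)})$.

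\emph{Step 3: applying Theorem \ref{t3.3} and killing the projection.} Theorem \ref{t3.3} then yields
\[
\|V\|_{[\WW^{2,2}(\Om_{-\varepsilon},T)]'}+\|v\|_{[\WW^{1,2}(\Om_{-\varepsilon})]'}\leq C\bigl(\|{\P}V\|_{[\WW^{2,2}(\Om_{-\varepsilon},T)]'}+\|U\|_{\LL^2}+\|W\|_{\LL^2(\pl\Om_{-\varepsilon},T)}\bigr).
\]
Since $\dim\N<\infty$, a compactness-and-contradiction argument in the spirit of Step~2 of Lemma \ref{l3.1} allows me to absorb $\|\P V\|$ into the left side, at the cost of adding the same right-hand-side quantities. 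The finite-dimensional elements of $\N$ produce only smoother data, and any sequence for which the inequality fails can be normalized and passed to the limit using the compactness of ${\L_0^*}^{-1}(\C+\A)$ from Theorem \ref{3t.2}(ii); the limiting element lies in $\N$, contradicting the construction.

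\emph{Step 4: recovering $W$ and $w$.} From $(\ref{1.25})$ one has $Dw=\nabla\n W-QV$ and the strain identity gives $\sym DW=U-w\Pi$, while $v=\tfrac12\div QW$ determines the antisymmetric part of $DW$. Hence $DW$ is controlled in $[\WW^{1,2}(\Om_{-\varepsilon},T^2)]'$ by $\|U\|_{\LL^2}+\|V\|_{[\WW^{2,2}]'}+\|v\|_{[\WW^{1,2}]'}+\|w\|_{[\WW^{1,2}]'}$. Lemma \ref{l3.1} then upgrades this to
\[
\|W\|_{\LL^2(\Om_{-\varepsilon},T)}\leq C\bigl(\|DW\|_{\WW^{-1,2}}+\|W\|_{\LL^2(\pl\Om_{-\varepsilon},T)}\bigr).
\]
Finally, substituting $Dw=\nabla\n W-QV$ into Lemma \ref{3l.2} and combining with the previous estimates closes the bookkeeping for $\|w\|_{[\WW^{1,2}(\pl\Om_{-\varepsilon})]'}$, after absorbing the $w$-term on the right by choosing the implicit constants appropriately.

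\emph{Main obstacle.} The nontrivial step is Step~3: the finite-dimensional null space $\N$ cannot be ignored, and the compactness argument that removes $\|\P V\|$ must use the specific compactness of $\L_0^{*-1}\C$ together with the vanishing boundary conditions on the two components $\Ga_{-\varepsilon}$ and $\Ga_{b_1}$, which have opposite Gaussian-curvature sign. The bookkeeping of boundary traces of $V$ in Step~2 (done in a purely distributional sense, since $V$ is only a generalized tensor a priori) is the other delicate point.
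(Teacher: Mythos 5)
Your overall architecture (pass to the $(v,V)$ system of Theorem \ref{t2.1}, apply Theorem \ref{t3.3}, remove the projection by compactness, then recover $(W,w)$) is the paper's, but the proposal has a genuine gap at its central step, Step 3. Removing $\|{\P}V\|$ is not a soft consequence of the compactness of ${\L_0^*}^{-1}(\C+\A)$: after normalizing a failing sequence and passing to the limit you obtain $(v_0,V_0)$ and $(w_0,W_0)$ of norm one solving the \emph{homogeneous} system $Dv_0=\nabla\n V_0$, $\div V_0=-\rho v_0$, $V_0|_{\pl\Om_{-\varepsilon}}=0$, together with $\sym DW_0+w_0\Pi=0$. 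The contradiction is not that ``the limiting element lies in $\N$'' --- $\N$ is a nontrivial finite-dimensional space and membership in it contradicts nothing. What is required, and what the paper supplies, is a uniqueness theorem for this homogeneous system on the mixed-type region: one shows $v_0=\frac12\div QW_0=0$ on $\Ga_{b_1}$ from the vanishing strain, reduces to the scalar equation $\div(\nabla\n)^{-1}Dv_0=-\rho v_0$ with zero Cauchy data, uses uniqueness of the elliptic Cauchy problem where $\kappa>0$, propagates zero Cauchy data to $\Ga_{-\varepsilon}$, and then invokes uniqueness of the non-characteristic hyperbolic Cauchy problem where $\kappa<0$; the argument is then repeated for $(w_0,W_0)$. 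You flag exactly this as the ``main obstacle'' but do not carry it out, and it is the mathematical core of the theorem --- it is where hypotheses (\ref{kappa+})--(\ref{kappa-}) and (\ref{Pi2.801}) actually enter.

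Two further points would not survive as written. First, the bookkeeping of Step 4 is circular: you bound $\|W\|_{\LL^2}$ via Lemma \ref{l3.1} using $\sym DW=U-w\Pi$, which costs a term $\|w\|_{[\WW^{1,2}(\Om_{-\varepsilon})]'}$, and you bound $w$ via Lemma \ref{3l.2} using $Dw=\nabla\n W-QV$, which costs a term in $W$; the constants are not small, so nothing can be ``absorbed.'' The paper breaks this circle by applying Theorem \ref{t3.3} a \emph{second} time, to the coupled first-order system (\ref{3n.34}) for $(w,W)$ itself (with $F=-QV$, $f=\tr U$), which bounds $\|W\|_{[\WW^{2,2}(\Om_{-\varepsilon},T)]'}+\|w\|_{[\WW^{1,2}(\Om_{-\varepsilon})]'}$ simultaneously; Lemma \ref{l3.1} is used only afterwards, to upgrade $W$ to $\LL^2$. (This second application produces the additional projection term ${\P}Q(Dw-\nabla\n W)$, which must also be killed in the compactness step.) Second, your Step 2 control of the boundary traces of $V$ by ``integrating by parts against a suitable lift'' is not a proof; what is needed is the trace estimate of Lemma \ref{l3.4}, obtained by differentiating the strain relation tangentially along $\Ga_{b_1}$ and $\Ga_{-\varepsilon}$ and using $\Pi(\tau,\tau)\not=0$ there.
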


\begin{proof}Let $V$ be given by (\ref{V2.29}). By Theorem \ref{t2.1} $(W,w)$ and $(v,V)$ satisfy problems
\be v=\frac12\div QW,\label{n3xx.33}\ee
\be\left\{\begin{array}{l}Dw=\nabla\n W-QV\qfq x\in \Om_{-\varepsilon},\\
\div W=-\rho w+\tr U\qfq x\in \Om_{-\varepsilon},\end{array}\right.\label{3n.34}\ee and
\be\left\{\begin{array}{l}Dv=\nabla\n V+Q\div QUQ\qfq x\in \Om_{-\varepsilon},\\
\div V=-\rho v-\<Q\nabla\n,U\>\qfq x\in \Om_{-\varepsilon},\end{array}\right.\label{n3x.34}\ee respectively.
Then
$$\div Q\nabla\n W=-\div V=\rho v-\<Q\nabla\n,U\>\qfq x\in\Om_{-\varepsilon}.$$

Applying Theorem \ref{t3.3} to problem (\ref{3n.34}) with
$$F=-QV,\quad f=\tr U,\quad p=\<W,QX\>|_{\Ga_{-\varepsilon}},\quad q=\<W,\nu\>|_{\Ga_{b_1}},$$ we have
\beq&&\|W\|_{[\WW^{2,2}(\Om_{-\varepsilon},T)]'}+\|w\|_{[\WW^{1,2}(\Om_{-\varepsilon})]'}\nonumber\\
&&\leq C\Big(\|{\P}W\|_{[\WW^{2,2}(\Om_{-\varepsilon},T)]'}+\|\div V\|_{[\WW^{2,2}(\Om_{-\varepsilon})]'}+\|V\|_{[\WW^{2,2}(\Om_{-\varepsilon},T)]'}+\|\tr U\|_{[\WW^{2,2}(\Om_{-\varepsilon})]'}\nonumber\\
&&\quad
+\|\<W,QX\>\|_{\WW^{-2,2}(\Ga_{-\varepsilon})}+\|\<W,\nu\>\|_{\WW^{-2,2}(\Ga_{b_1})}\Big).\label{3x.35}\eeq
Then applying Theorem \ref{t3.3} to problem (\ref{n3x.34}) with
$$F=Q\div QUQ,\quad f=-\<Q\nabla\n,U\>,\quad p=\<V,QX\>,\quad q=\<V,\nu\>$$ yields
\beq&&\|\div V\|_{[\WW^{2,2}(\Om_{-\varepsilon})]'}+\|V\|_{[\WW^{2,2}(\Om_{-\varepsilon},T)]'}\nonumber\\
&&\leq \|\<Q\nabla\n,U\>\|_{[\WW^{2,2}(\Om_{-\varepsilon})]'}+C(\|V\|_{[\WW^{2,2}(\Om_{-\varepsilon},T)]'}+\|v\|_{[\WW^{1,2}(\Om_{-\varepsilon})]'})\nonumber\\
&&\leq C(\|{\P}V\|_{[\WW^{2,2}(\Om_{-\varepsilon},T)]'}+\|U\|_{\LL^2(\Om_{-\varepsilon},T)}\nonumber\\
&&\quad+\|\<V,QX\>\|_{\WW^{-2,2}(\Ga_{-\varepsilon})}+\|\<V,\nu\>\|_{\WW^{-2,2}(\Ga_{b_1})}),\label{3x.36}\eeq where Lemma \ref{3l.3} has been used. In addition, it follows from the first equation in (\ref{3n.34}) that
\be{\P}V={\P}Q(Dw-\nabla\n W).\label{3x.37}\ee

Using (\ref{3x.36}) and (\ref{3x.37}) in (\ref{3x.35}) and by Lemma \ref{l3.4} below, we have
\beq&&\|W\|_{[\WW^{2,2}(\Om_{-\varepsilon},T)]'}+\|w\|_{[\WW^{1,2}(\Om_{-\varepsilon})]'}\leq C\Big(\|{\P}W\|_{[\WW^{2,2}(\Om_{-\varepsilon},T)]'}\nonumber\\
&&\quad+\|U\|_{\LL^2(\Om_{-\varepsilon},T^2)}+\|W\|_{\LL^2(\pl\Om_{-\varepsilon},T)}
+\|{\P}Q(Dw-\nabla\n W)\|_{\WW^{-2,2}(\pl\Om_{-\varepsilon},T)}\Big).\quad\label{n3x.37}\eeq
We claim that the terms $\|{\P}W\|_{[\WW^{2,2}(\Om_{-\varepsilon},T)]'}$ and $\|{\P}Q(Dw-\nabla\n W)\|_{[\WW^{2,2}(\Om_{-\varepsilon},T)]'}$ can be removed in (\ref{n3x.37}) to have
\beq&&\|W\|_{[\WW^{2,2}(\Om_{-\varepsilon},T)]'}+\|w\|_{[\WW^{1,2}(\Om_{-\varepsilon})]'}\leq C\Big(\|U\|_{\LL^2(\Om_{-\varepsilon},T^2)}+\|W\|_{\LL^2(\pl\Om_{-\varepsilon},T)}\Big)\quad\label{n3x.38}\eeq by a compactness-uniqueness argument as follows.

By contradiction. Suppose that (\ref{n3x.38}) does not hold true. Then there are $w_k,$ $W_k,$ $v_k,$  $V_k,$ and $U_k,$ which satisfy (\ref{n3xx.33})-(\ref{n3x.34}), respectively, such that
\be1=\|W_k\|_{\LL^2(\Om_{-\varepsilon},T)}+\|w_k\|_{[\WW^{1,2}(\pl\Om_{-\varepsilon})]'}\geq k(\|U_k\|_{\LL^2(\Om_{-\varepsilon},T^2)}+\|W_k\|_{\LL^2(\pl\Om_{-\varepsilon},T)})\ee for all $k\geq1.$
Then
$$U_k\rw0\qiq\LL^2(\Om_{-\varepsilon},T^2),\quad W_k\rw0\qiq\LL^2(\pl\Om_{-\varepsilon},T).$$ By Lemma \ref{l3.4} below
$$V_k\rw0\qiq\WW^{-2,2}(\pl\Om_{-\varepsilon}),\quad w_k\rw0\qiq\WW^{-1,2}(\pl\Om_{-\varepsilon}).$$ Noting that $\div V_k=-\rho v_k-\<Q\nabla\n, U_k\>,$ by (\ref{3x.36}) there is a subsequence, denoted stll by $(v_k,V_k),$ that converges to some
$(v_0,V_0)\in[\WW^{2,2}(\Om_{-\varepsilon})]'\times[\WW^{2,2}(\Om_{-\varepsilon},T)]'$ in $[\WW^{2,2}(\Om_{-\varepsilon})]'\times[\WW^{2,2}(\Om_{-\varepsilon},T)]'.$
It follows from (\ref{n3x.34}) that
\be\left\{\begin{array}{l}Dv_0=\nabla\n V_0\qfq x\in \Om_{-\varepsilon},\\
\div V_0=-\rho v_0\qfq x\in \Om_{-\varepsilon},\\
V_0|_{\pl\Om_{-\varepsilon}}=0.\end{array}\right.\label{n3x.40}\ee
By (\ref{n3x.37}) there exists a subsequence $(w_k,W_k)$ which satisfies
$$(w_k,W_k)\rw (w_0,W_0)\qiq [\WW^{1,2}(\Om_{-\varepsilon})]'\times[\WW^{2,2}(\Om_{-\varepsilon},T)]'$$ for some $(w_0,W_0)\in[\WW^{1,2}(\Om_{-\varepsilon})]'\times[\WW^{2,2}(\Om_{-\varepsilon},T)]'.$
It is easy to check that $(w_0,W_0)$ solves problem
\be \left\{\begin{array}{l}\|W_0\|_{\LL^2(\Om_{-\varepsilon},T)}+\|w_0\|_{[\WW^{1,2}(\Om_{-\varepsilon})]'}=1,\\
Dw_0=\nabla\n W_0-QV_0\qfq x\in \Om_{-\varepsilon},\\
\div W_0=-\rho w_0\qfq x\in \Om_{-\varepsilon},\\
w_0|_{\pl\Om_{-\varepsilon}}=0,\quad W_0|_{\pl\Ga_{-\varepsilon}}=0,\end{array}\right.\label{n3x.41}\ee
$$\sym DW_0+w_0\Pi=0\qfq x\in\Om_{-\varepsilon},$$ and
$$v_0=\frac12\div QW_0\qfq x\in\Om_{-\varepsilon},$$ respectively.

Consider problem (\ref{n3x.40}). Let $\tau=\a_t/|\a_t|$ for $x\in\Ga_{b_1}.$ Suppose  $\nu=Q\tau.$  Then $\nu,$ $\tau$ has positive orientation. We have
\beq2v_0&&=\div QW_0=\<D_\nu W_0,\tau\>-\<D_\tau W_0,\nu\>=\<D_\nu W_0,\tau\>+\<D_\tau W_0,\nu\>\nonumber\\
&&=-2w_0\Pi(\tau,\nu)=0\qfq x\in\Ga_{b_1}.\nonumber\eeq
It follows from (\ref{n3x.40}) that
\be\left\{\begin{array}{l}\div(\nabla\n)^{-1}Dv_0=-\rho v_0\qfq x\in \Om_{-\varepsilon},\\
v_0=Dv_0=0\qfq x\in\Ga_{b_1}.\end{array}\right.\ee
By the uniqueness of the elliptic problem
$$v_0=0\qfq x\in\Om_{-\varepsilon},\quad\kappa(x)>0.$$
Moreover,  a similar argument yields
$$v_0=0\qfq x\in\Ga_{-\varepsilon}.$$ Then we have a hyperbolic problem
$$\left\{\begin{array}{l}\div(\nabla\n)^{-1}Dv_0=-\rho v_0\qfq x\in \Om_{-\varepsilon},\quad\kappa(x)<0,\\
v_0=Dv_0=0\qfq x\in\Ga_{-\varepsilon}.\end{array}\right.$$
Since the curves
$$\a(\cdot,s)\qfq -\varepsilon\leq s\leq 0,$$ are not characteristic, the uniqueness of the hyperbolic problem implies
$$v_0=0\qfq x\in\Om_{-\varepsilon},\quad \kappa(x)<0.$$
Thus we obtain
$$v_0=V_0=0\qfq x\in\Om_{-\varepsilon}.$$ Moreover, by (\ref{n3x.41}) we have
$$\left\{\begin{array}{l}\div(\nabla\n)^{-1}Dw_0=-\rho w_0\qfq x\in \Om_{-\varepsilon},\\
w_0=Dw_0=0\qfq x\in\pl\Om_{-\varepsilon}.\end{array}\right.$$
By a similar argument as for $(v_0,V_0)$ we obtain
$$w_0=W_0=0\qfq x\in\Om_{-\varepsilon},$$ which contradicts the first equality in (\ref{n3x.41}).

Finally, noting that
$$\|DW\|_{[\WW^{1,2}(\Om_{-\varepsilon},T^2)]'}=\|\sym DW\|_{[\WW^{1,2}(\Om_{-\varepsilon},T^2)]'},$$ by  Lemma \ref{l3.1} and (\ref{n3x.38}) we have
\beq\|W\|_{\LL^2(\Om_{-\varepsilon},T)}&&\leq C(\|\sym DW\|_{[\WW^{1,2}(\Om_{-\varepsilon},T^2)]'}+\|W\|_{\LL^2(\pl\Om_{-\varepsilon},T)})\nonumber\\
&&\leq C(\|U\|_{[\WW^{1,2}(\Om_{-\varepsilon},T^2)]'}+\|w\|_{[\WW^{1,2}(\Om_{-\varepsilon})]'}+\|W\|_{\LL^2(\pl\Om_{-\varepsilon},T)})\nonumber\\
&&\leq \Big(\|U\|_{\LL^2(\Om_{-\varepsilon},T^2)}+\|W\|_{\LL^2(\pl\Om_{-\varepsilon},T)}\Big).\label{3xx.45}\eeq
Thus (\ref{3.26}) follows from (\ref{n3x.38}) and (\ref{3xx.45}).
\end{proof}

\begin{lem} \label{l3.4} Suppose that $(w,W)$ and $(v,V)$ solve problem $(\ref{3n.34})$ and $(\ref{n3x.34}),$ respectively. Then
\be \|V\|_{\WW^{-2,2}(\pl\Om_{-\varepsilon})}+\|w\|_{\WW^{-1,2}(\pl\Om_{-\varepsilon})}\leq C(\|U\|_{\LL^2(\Om_{-\varepsilon},T^2)}+\|W\|_{\LL^2(\pl\Om_{-\varepsilon},T)}
).\label{3x.37}\ee
\end{lem}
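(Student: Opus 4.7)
The plan is to derive both boundary estimates simultaneously by a duality argument, testing $V$ against vector fields in $\WW^{2,2}(\pl\Om_{-\varepsilon},T)$ and $w$ against scalars in $\WW^{1,2}(\pl\Om_{-\varepsilon})$, and using the coupled generalized-tensor equations $(\ref{3n.34})$, $(\ref{n3xx.33})$, and $(\ref{n3x.34})$ to reduce the pairings to interior norms of $U$ and boundary data of $W$.

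For the $V$-bound, given $Z\in\WW^{2,2}(\pl\Om_{-\varepsilon},T)$, I would decompose $Z$ on $\pl\Om_{-\varepsilon}$ into its components along $\nu$ and $QX$, lift each scalar component to $\Om_{-\varepsilon}$ at $\WW^{5/2,2}$-regularity by an elliptic extension, and integrate the scalar identity $\div V=-\rho v-\<Q\nabla\n,U\>$ from $(\ref{n3x.34})$ against these lifts. The resulting interior pairing involves $\rho v$ and $\<Q\nabla\n,U\>$; substituting $v=\tfrac12\div QW$ from $(\ref{n3xx.33})$ and applying Lemma $\ref{3l.3}$ with $P=QW$ bounds $\|v\|_{[\WW^{2,2}(\Om_{-\varepsilon})]'}\le C\|W\|_{\LL^2}$, while the boundary terms produced by the integration by parts are absorbed into $\|W\|_{\LL^2(\pl\Om_{-\varepsilon},T)}$.

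For the $w$-bound, contracting the first equation of $(\ref{3n.34})$ on $\pl\Om_{-\varepsilon}$ with the unit tangent $\tau$ yields
$$\partial_\tau w=\<\nabla\n W,\tau\>-\<QV,\tau\>\qfq x\in\pl\Om_{-\varepsilon},$$
where the first summand is bounded in $\LL^2(\pl\Om_{-\varepsilon})$ by $\|W\|_{\LL^2(\pl\Om_{-\varepsilon})}$ and the second in $\WW^{-2,2}(\pl\Om_{-\varepsilon})$ by Step 1. Since each connected component of $\pl\Om_{-\varepsilon}$ is a closed curve ($\a(\cdot,-\varepsilon)$ or $\a(\cdot,b_1)$), this identity determines $w$ on each component in $\WW^{-1,2}$ modulo an additive constant; each constant is pinned down by testing the scalar identity $\rho w=\tr U-\div W$ (the second line of $(\ref{3n.34})$) against a smooth cutoff concentrating near the corresponding boundary curve, using that $\rho=\tr\Pi\ne 0$ there under $(\ref{kappa+})$--$(\ref{kappa-})$ together with $\Pi(\a_t,\a_t)>0$.

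The main obstacle is that the two estimates are genuinely coupled: the first equation of $(\ref{3n.34})$ reads $QV=\nabla\n W-Dw$, so the boundary traces of $V$ and of $Dw$ cannot be estimated independently of one another. One must therefore set up the two dualities in parallel and close them simultaneously, rather than proving one and feeding it into the other. A secondary technical difficulty is that $w,v,V$ exist only as generalized tensors through Theorem $\ref{t2.1}$, so each integration by parts has to be interpreted via the definitions $(\ref{n1.2})$--$(\ref{1.7})$, and the liftings from the boundary must be performed at the regularity indicated by Lemmas $\ref{3l.2}$ and $\ref{3l.3}$ so that the exponents on the test side match $\WW^{-1,2}$ and $\WW^{-2,2}$ on the data side.
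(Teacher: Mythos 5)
Your proposal misses the key mechanism of the paper's argument, and both of its steps have gaps that I do not see how to close. For the $V$-estimate: integrating $\div V=-\rho v-\<Q\nabla\n,U\>$ against an interior lift $z$ of boundary data produces, besides the boundary term $\int_{\pl\Om_{-\varepsilon}}\<V,\nu\>z\,d\Ga$, the interior pairing $(V,Dz)_{\LL^2(\Om_{-\varepsilon},T)}$; neither the interior norm of $V$ nor that of $v$ appears on the right-hand side of (\ref{3x.37}), and bounding them is exactly what Theorem \ref{t3.3} does \emph{after} Lemma \ref{l3.4} has supplied the boundary data, so the argument is circular. Moreover this route could at best reach the normal component $\<V,\nu\>$ on the boundary, not the full vector trace. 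For the $w$-estimate: $\pl_\tau w$ determines $w$ on each closed boundary curve only modulo a constant, and your proposed normalization --- testing $\rho w=\tr U-\div W$ against a cutoff concentrated near the boundary --- again drags in interior norms of $w$ and of $W$, which are not available on the right-hand side.

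The paper's proof is local and algebraic on the boundary and avoids all of this. Restricting the $(\tau,\tau)$-component of the strain equation to $\Ga_{b_1}$ (and likewise to $\Ga_{-\varepsilon}$) gives $w\,\Pi(\tau,\tau)=U(\tau,\tau)-DW(\tau,\tau)$, where $DW(\tau,\tau)=\tau\<W,\tau\>-\<W,D_\tau\tau\>$ involves only tangential derivatives of the trace of $W$; since $\Pi(\tau,\tau)\not=0$ on these curves, this yields the bound on $\|w\|_{\WW^{-1,2}(\pl\Om_{-\varepsilon})}$ directly, with no constant ambiguity. Differentiating the same identity in the direction $\nu$ and using the Ricci identity to rewrite $D^2W(\tau,\tau,\nu)$ as $\tau(DW(\tau,\nu))$ plus curvature and Christoffel terms --- all of which reduce, via $DW(\tau,\nu)+DW(\nu,\tau)=2U(\tau,\nu)-2w\Pi(\tau,\nu)$, to tangential derivatives of already-controlled boundary quantities --- gives the bound on $\|\nu(w)\|_{\WW^{-2,2}}$. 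The full trace of $V$ then follows pointwise from $QV=\nabla\n W-Dw$. If you want to salvage your outline, this boundary restriction of the strain equation is the ingredient you must add; the interior dualities are not needed at all.
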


\begin{proof} Let $\tau=\a_t/|\a_t|$ for $x\in\Ga_{b_1},$ Then $\tau,$ $\nu$ forms an orthonormal basis along $\Ga_{b_1}.$ We have
\be w\Pi(\tau,\tau)=U(\tau,\tau)-DW(\tau,\tau)\qfq x\in\Ga_{b_1}.\label{3x.40}\ee Since $\Pi(\tau,\tau)\not=0$ for all $x\in\Ga_{B_1},$ it follows from (\ref{3x.40}) that
\beq\|w\|_{\WW^{-1,2}(\pl\Om_{-\varepsilon})}&&\leq C(\|U\|_{\WW^{-1,2}(\pl\Om_{-\varepsilon},T)}+\|\tau\<W,\tau\>\|_{\WW^{-1,2}(\pl\Om_{-\varepsilon})}+\|\<W,D_\tau\tau\>\|_{\WW^{-1,2}(\pl\Om_{-\varepsilon})})\nonumber\\
&&\leq C(\|U\|_{\LL^2(\Om_{-\varepsilon},T^2)}+\|W\|_{\LL^2(\pl\Om_{-\varepsilon},T)}).\label{3x.41}\eeq

Moreover, by (\ref{3x.40}) we obtain
\beq \nu(w)\Pi(\tau,\tau)&&=\nu(U(\tau,\tau))-D^2W(\tau,\tau,\nu)-\<D_\tau\tau,\nu\>[DW(\nu,\tau)+DW(\tau,\nu)]-w\nu(\Pi(\tau,\tau))\nonumber\\
&&=\nu(U(\tau,\tau))-D^2W(\tau,\tau,\nu)-2\<D_\tau\tau,\nu\>U(\tau,\nu)\nonumber\\
&&\quad+w[2\Pi(\tau,\nu)-\nu(\Pi(\tau,\tau))]\qfq x\in\Ga_{b_1}.\label{3x.38}\eeq
Then
\beq\|\nu(w)\|_{\WW^{-2,2}(\Ga_{b_1})}&&\leq C(\|U\|_{\LL^2(\Om_{-\varepsilon},T)}+\|w\|_{\WW^{-1,2}(\Ga_{b_1})}\nonumber\\
&&\quad+\|D^2W(\tau,\tau,\nu)\|_{\WW^{-2,2}(\Ga_{b_1})}).\label{3x.39}\eeq
Next, we shall estimate $\|D^2W(\tau,\tau,\nu)\|_{\WW^{-2,2}(\Ga_{b_1})}.$ By Ricci's identity, we have
\beq D^2W(\tau,\tau,\nu)&&=D^2W(\tau,\nu,\tau)+R(\tau,\nu,W,\tau)=\tau(DW(\tau,\nu))\nonumber\\
&&\quad-\<D_\tau\tau,\nu\>DW(\nu,\nu)-\<D_\tau\nu,\tau)DW(\tau,\tau)+R(\tau,\nu,W,\tau)\nonumber\\
&&=-\tau(DW(\nu,\tau))+2\tau(U(\tau,\nu))-2\tau(w\Pi(\tau,\nu))+R(\tau,\nu,W,\tau)\nonumber\\
&&\quad-\<D_\tau\tau,\nu\>[U(\nu,\nu)-w\Pi(\nu,\nu)]-\<D_\tau\nu,\tau)[U(\tau,\tau)-w\Pi(\tau,\tau)],\nonumber\eeq which yields
$$ \|D^2W(\tau,\tau,\nu)\|_{\WW^{-2,2}(\Ga_{b_1})}\leq C(\|U\|_{\LL^2(\Om_{-\varepsilon},T)}+\|W\|_{\LL^2(\Ga_{b_1})}+\|w\|_{\WW^{-1,2}(\Ga_{b_1})}),$$ where $R(\cdot,\cdot,\cdot,\cdot)$ is the Riemmannian curvature tensor.
It follows from (\ref{3x.39}) that
$$\|\nu(w)\|_{\WW^{-2,2}(\Ga_{b_1})}\leq C(\|U\|_{\LL^2(\Om_{-\varepsilon},T)}+\|W\|_{\LL^2(\Ga_{b_1})}+\|w\|_{\WW^{-1,2}(\Ga_{b_1})}).$$
By the first equation in (\ref{3n.34}) we obtain
\beq\|V\|_{\WW^{-2,2}(\Ga_{b_1},T)}&&\leq C(\|W\|_{\WW^{-2,2}(\Ga_{b_1},T)}+\|\nu(w)\|_{\WW^{-2,2}(\Ga_{b_1})}+\|\tau(w)\|_{\WW^{-2,2}(\Ga_{b_1})})\nonumber\\
&&\leq C(\|U\|_{\LL^2(\Om_{-\varepsilon},T)}+\|W\|_{\LL^2(\Ga_{b_1})}+\|w\|_{\WW^{-1,2}(\Ga_{b_1})}).\label{3x.44}\eeq

Combing (\ref{3x.41}) and (\ref{3x.44}), we have
$$\|V\|_{\WW^{-2,2}(\Ga_{b_1},T)}+\|w\|_{\WW^{-1,2}(\Ga_{-\varepsilon})}\leq C(\|U\|_{\LL^2(\Om_{-\varepsilon},T)}+\|W\|_{\LL^2(\Ga_{-\varepsilon})}).$$
A similar argument shows that the above estimates hold when $\Ga_{b_1}$ is replaced by $\Ga_{-\varepsilon}.$
Thus (\ref{3x.37}) follows.
\end{proof}

{\bf Proof of Theorem \ref{t1.1}.}\,\,\,Let $y=W+w\n$ satisfy problem (\ref{s}) for given $U\in\LL^2(S,T^2).$ Let
$$S({-b_0},-\varepsilon)=\{\,\a(t,s)\in S\,|\,(t,s)\in(0,a)\times(-b_0,-\varepsilon)\,\}.$$ By (\ref{Pi2.801}) $S({-b_0},-\varepsilon)$ is a  non-characteristic region. By \cite[Theorem 1.1]{Yao2019} there is a $C>0,$ independent of $y,$ such that
\be\|W\|_{\LL^2(S({-b_0},-\varepsilon),T)}\leq C(\|U\|_{\LL^2(S({-b_0},-\varepsilon),T^2)}+\|W\|_{\LL^2(\Ga_{-b_0},T)}),\label{3.52}\ee from that we obtain
\beq\|w\|_{[\WW^{1,2}(S({-b_0},-\varepsilon))]'}&&\leq C(\|U\|_{[\WW^{1,2}(S({-b_0},-\varepsilon),T^2)]'}+\|\sym DW\|_{[\WW^{1,2}(S({-b_0},-\varepsilon),T^2)]'})\nonumber\\
&&\leq  C(\|U\|_{\LL^2(S({-b_0},-\varepsilon),T^2)}+\|W\|_{\LL^2(S({-b_0},-\varepsilon),T)})\nonumber\\
&&\leq C(\|U\|_{\LL^2(S({-b_0},-\varepsilon),T^2)}+\|W\|_{\LL^2(\Ga_{-b_0},T)}),\label{3.53}\eeq where the assumption
$$\Pi(\a_t,\a_t)>0\qflq x\in\overline{S({-b_0},-\varepsilon)}$$ is used. Moreover, by \cite[Lemma 3.6]{Yao2019}
\be\|W\|_{\LL^2(\Ga_{-\varepsilon},T)}\leq C(\|U\|_{\LL^2(S({-b_0},-\varepsilon),T^2)}+\|W\|_{\LL^2(\Ga_{-b_0},T)}).\label{3.54}\ee
Thus estimate (\ref{1.3}) follows from Theorem \ref{t3.4} and (\ref{3.52})-(\ref{3.54}). \hfill$\Box$\\

{\bf Ethical approval}

This article does not contain any studies with human participants or animals performed by the author.

{\bf Declaration of competing interest}

The author declares that there is no conflict of interest.

 \end{document}